\newcommand\nmberthis{\addtocounter{equation}{1}\tag{\theequation}}
\begin{document}

\title{Efficient Hamiltonian encoding algorithms for extracting quantum control mechanism as interfering pathway amplitudes in the Dyson series}

\author{Erez Abrams}
\orcid{0009-0009-7971-9571}
\email{erezm@mit.edu}
\affiliation{Princeton University}
\affiliation{Massachusetts Institute of Technology}
\author{Michael Kasprzak}
\orcid{0009-0008-8304-6798}
\email{mk7592@princeton.edu}
\author{Gaurav Bhole}
\orcid{0000-0002-8300-9822}
\author{Tak‐San Ho}
\orcid{0000-0003-1286-0289}
\author{Herschel Rabitz}
\orcid{0000-0002-4433-6142}
\affiliation{Princeton University}

\maketitle

\begin{abstract}
  Hamiltonian encoding is a methodology for revealing the mechanism behind the dynamics governing controlled quantum systems. In this paper, following Mitra and Rabitz {[Phys. Rev. A 67, 033407 (2003)]}, we define mechanism via pathways of eigenstates that describe the evolution of the system, where each pathway is associated with a complex-valued amplitude corresponding to a term in the Dyson series. The evolution of the system is determined by the constructive and destructive interference of these pathway amplitudes. Pathways with similar attributes can be grouped together into pathway classes. The amplitudes of pathway classes are computed by modulating the Hamiltonian matrix elements and decoding the subsequent evolution of the system rather than by direct computation of the individual terms in the Dyson series. The original implementation of Hamiltonian encoding was computationally intensive and became prohibitively expensive in large quantum systems. This paper presents two new encoding algorithms that calculate the amplitudes of pathway classes by using techniques from graph theory and algebraic topology to exploit patterns in the set of allowed transitions, greatly reducing the number of matrix elements that need to be modulated. These new algorithms provide an exponential decrease in both computation time and memory utilization with respect to the Hilbert space dimension of the system. To demonstrate the use of these techniques, they are applied to two illustrative state-to-state transition problems. 
\end{abstract}

\section{Introduction}

Quantum control generally operates by using an external field, such as a shaped laser pulse or magnetic field, to manipulate the dynamics of a quantum system. Quantum control continues to be studied extensively in numerical simulations \cite{OC_the_1, OC_the_2, OC_the_3, OC_the_4, OC_the_5} and in the laboratory \cite{OC_exp_1, OC_exp_2, OC_exp_3}. It is common for control fields to be tailored such that the system will start at an initial state $\ket{a}$, evolve continuously through superposition states, and eventually arrive at its final state $\ket{b}$. Control fields may also be tailored to manipulate more general physical observables or create particular unitary transformations. This paper will focus on the basic case of state-to-state transitions, but in all circumstances the control induces a particular dynamical mechanism. 

In order to keep a well-defined focus, this paper will consider extracting the mechanism of controlled closed quantum systems with a finite number of discrete eigenstates. A common mechanistic diagnostic is the creation of state population plots with respect to time; in some favorable cases, mechanistic insights may be garnered from such plots, but it is generally difficult to reach a quantitative understanding of the mechanism behind a system's evolution without the use of additional analytical tools. Another common procedure is to examine the control field in a chosen function representation (e.g.~in time, frequency, time-frequency, etc.), and again this procedure can be helpful in some favorable cases. In order to establish a systematic, quantitative, and information-rich mechanism extraction procedure, Mitra and Rabitz proposed revisiting the foundations of quantum control and extracting the constructive and destructive interference of the amplitudes responsible for reaching the objective. In particular, they introduced Hamiltonian encoding techniques which decompose the evolution of the system into pathways of eigenstates by associating them with the individual components of the resultant unitary transformation's Dyson series expansion \cite{abhra_1, abhra_2, abhra_3, abhra_4}. This paper substantially improves the computational efficiency of these Hamiltonian encoding techniques, enabling future mechanism analysis studies of controlled dynamics for more complex quantum systems.

Following \cite{abhra_1}, this paper defines mechanism in terms of pathways corresponding to terms in the Dyson series expansion of the time-evolution operator, where each term defines a complex-valued pathway amplitude representing the contribution of its associated pathway to the overall mechanism. Several ways to classify similar pathways were defined in the prior work \cite{abhra_1} and will be given specific attention here. Sufficiently similar pathways can be grouped together into pathway classes, where the amplitude of a pathway class is the sum of the amplitudes of the individual pathways in the class. Hamiltonian encoding extracts pathway class amplitudes by modulating the Hamiltonian and decoding the subsequent evolution, which is much more efficient than numerically evaluating the amplitudes of each term in the series. This paper will show that the computational cost of Hamiltonian encoding can be significantly reduced by exploiting topological properties of the graph representing the system's allowed transitions, enabling the fast extraction of pathway class amplitudes by eliminating redundant information from the pathway class. We present two new algorithms that utilize these properties: \textit{Optimal Hermitian Pathway Encoding} (OHPE) and \textit{Non-Hermitian Pathway Encoding} (NHPE) each provide an exponential improvement over the original methods \cite{abhra_1} for encoding Hermitian pathway classes and non-Hermitian pathway classes, respectively (Hermitian and non-Hermitian pathway classes are defined in Section \ref{ssec:HNHPE}). In particular, OHPE is proven to be the optimal method of encoding Hermitian pathway classes with respect to both computation time and memory space complexity. As such, this paper provides substantial algorithmic advances in Hamiltonian encoding mechanism analysis; Section \ref{sec:numerical_illustration} demonstrates these improvements numerically. Follow-up papers will explore particular physical applications of these algorithms, e.g.~examining the mechanism of quantum unitary transformations serving as qubit gates.

To provide necessary background, a summary of previous work on Hamiltonian encoding is presented in Section \ref{sec:pathways_and_encoding}. The new encoding schemes, OHPE and NHPE, are presented in Section \ref{sec:OHPE_NHPE}. Numerical illustrations of these algorithms are provided in Section \ref{sec:numerical_illustration}. Concluding remarks regarding future applications of these techniques are presented in Section \ref{sec:conclusion}. Appendices \ref{apx:motiv_exam}, \ref{apx:proof}, and \ref{apx:cycles} provide a motivating example for OHPE, a formal proof of its correctness and optimality, and a novel reinterpretation of its encoding, respectively. Appendix \ref{apx:NHPE} provides an informal proof of correctness for NHPE, and finally Appendix \ref{apx:algo} details supplementary algorithms that aid in the implementation of Hamiltonian encoding.

\section{Pathways and Hamiltonian Encoding}\label{sec:pathways_and_encoding}
This section chiefly reviews previous work as background on analyzing mechanistic pathways extracted from the Dyson series through Hamiltonian encoding \cite{abhra_1, koswara}. Hamiltonian encoding is carried out by modulating the Hamiltonian in an additional time-like parameter $s$, then decoding the resulting evolution of the modulated system. In this algorithm, mechanism is defined as resulting from the constructive and destructive interference of the pathway amplitudes constituting the Dyson series. A description of Hamiltonian encoding via complex exponentials is followed by a discussion of Hermitian and non-Hermitian encoding in more detail. This section also includes some expansion on previous works, especially with regards to the development of a more robust formalism for Hamiltonian encoding (Sections \ref{sssec:domains} and \ref{sssec:choosing_b}).

\subsection{Pathways in the Dyson Series}\label{ssec:pdef}
Quantum systems with Hamiltonians of the form $H = H_0 - \mu \varepsilon(t)$ are common in control scenarios, where $H_0$ is the field free Hamiltonian, $\mu$ is the dipole operator, and $\varepsilon(t)$ is the control field. The eigenvalues and eigenvectors of $H_0$ are denoted $E_i$ and $\ket{i}$, respectively, satisfying $H_0 \ket{i} = E_i \ket{i}$ for $i= 1,2,\dots,d$ where $d$ is the dimension (i.e.~the number of energy levels) of the quantum system. The unitary time-evolution operator $U(t)$ in the interaction picture is governed by the time-dependent Schr\"odinger equation:
\begin{equation}
    i\hbar \dv{t}U(t) = V_I(t) U(t), \label{eq:schro_eq}
\end{equation}
where $V_I(t) \equiv  -\exp(iH_0t/\hbar) \mu\varepsilon(t) \exp(-iH_0t/\hbar)$. The Dyson series expansion of $U(t)$ is given as follows:
\begin{align*}\label{eq:dyson_full}
     U\hspace{-5pt}&\hspace{5pt}(t) = 1 + \qty(\frac{-i}{\hbar})\!\int_{0}^{t} V_I(t_1) \dd{t_1}\\  \notag
     &+ \qty(\frac{-i}{\hbar})^2\!\int_{0}^{t}\! \int_{0}^{t_2} V_I(t_2)  V_I(t_1) \dd{t_1}\!\, \dd{t_2}\\ \notag
     &+ \qty(\frac{-i}{\hbar})^3\!\int_{0}^{t}\! \int_{0}^{t_3}\!\!\! \int_{0}^{t_2}\!\, V_I(t_3)  V_I(t_2)  V_I(t_1)\!\hspace{0.07em} \dd{t_1}\! \dd{t_2}\!\hspace{.07em}\dd{t_3}\\ \notag
     &+\ \cdots. \nmberthis
\end{align*}
Normally the Dyson expansion is introduced as a formal solution of Eq.~\eqref{eq:schro_eq}, and in the present context we are only using it to exract control mechanism information; the numerical solution for $U(t)$ alone would likely follow other traditional numerical methods. Hamiltonian encoding techniques provide practical means for numerically determining the terms in this expansion, sidestepping the need to evaluate the integrals directly. Importantly, a definition of control mechanism expressed in terms of pathways and their amplitudes arises from the physical interpretation of this expansion. A common quantum control problem involves a population transfer from a starting state $\ket a$ to a final state $\ket b$ at final time $T$. Repeatedly inserting the completeness relation $\sum_{i=1}^d \dyad{i} = 1$ into Eq.~\eqref{eq:dyson_full} and defining $v_{ji}(t) \equiv - \frac i \hbar \mel{j}{V_I(t)}{i} =  \frac i \hbar \mel{j}{\mu}{i} e^{i (E_j - E_i) t / \hbar} \varepsilon(t)$, the transition amplitude $U_{ba} \equiv \mel{b}{U(T)}{a}$ can be expressed as:
\begin{equation} \label{eq:dyson_elements}
    U_{ba} = \sum_{n=0}^\infty \sum_{l_{n-1}=1}^d\cdots\sum_{l_1=1}^d U_{ba}^{n(l_1,\dots,l_{n-1})}
\end{equation}
where
\begin{align} \label{eq:pathway_amps}
        U^{n(l_1,\dots,l_{n-1})}_{ba}\hspace{30pt}& \nonumber \\ 
        \equiv \int_0^T\int_0^{t_n}\cdots\int_0^{t_2} &v_{bl_{n-1}}(t_n)v_{l_{n-1}l_{n-2}}(t_{n-1})\cdots \nonumber \\ 
        & v_{l_1a}(t_1) \dd{t_1}\cdots\dd{t_{n-1}}\dd{t_n}.
\end{align}
The index $n$ in Eqs.~\eqref{eq:dyson_elements} and \eqref{eq:pathway_amps} denotes the \textit{order} of the term, i.e.~the number of transitions induced by the interaction operators $V_I$. The parentheses $(l_1,\dots,l_{n-1})$ denote a list of the intermediate states in the $n$th term of the Dyson series. A \textit{pathway} between $\ket{a}$ and $\ket{b}$ is a sequence of $n$ transitions \mbox{$\ket{a} \!\to\! \ket{l_1} \!\to\! \cdots \!\to\! \ket{l_{n-1}} \!\to\! \ket{b}$} with \mbox{$n-1$} intermediate states $\ket{l_i}$.  Figure \ref{fig:3l_ex_paths} displays several examples of pathways in a model three-level system. Eq.~\eqref{eq:dyson_elements} decomposes the Dyson series into terms that each correspond to an individual pathway; the complex value $U^{n(l_1,\dots,l_{n-1})}_{ba}$ is the \textit{path\-way amplitude} of \mbox{$\ket{a} \!\to\! \ket{l_1} \!\to\! \cdots \!\to\! \ket{l_{n-1}} \!\to\! \ket{b}$.} Pathways with large amplitudes have a large contribution to the controlled dynamics, and these complex amplitudes constructively and destructively interfere to drive the evolution of the system. The goal of mechanism analysis in this framework is to examine the individual terms of Eq.~\eqref{eq:dyson_elements} and determine those which contribute significantly to the interference between pathways.

It is often beneficial to group similar pathways into a \textit{pathway class} (whose amplitude is the sum of all individual pathway amplitudes within the class) when these pathways differ only by features which are deemed unimportant when seeking a coarse-grained assessment of the mechanism. Two notable examples of such features are \textit{time-sequencing} and \textit{backtracking}. Time-sequencing describes the sequence in which transitions occur in a pathway, so \mbox{$\ket 1 \!\to\! \ket 2 \!\to\! \ket 1 \!\to\! \ket3 \!\to\! \ket1 \!\to\! \ket2$} and \mbox{$\ket1 \!\to\! \ket3 \!\to\! \ket1 \!\to\! \ket2 \!\to\! \ket1 \!\to\! \ket2$} differ only in the time-sequencing of their transitions. Backtracking is when a pathway includes a transition from state $\ket i$ to $\ket j$ and later a transition from $\ket j$ to $\ket i$; this includes Rabi flopping, where a pathway includes a transition from $\ket i$ to $\ket j$ and immediately thereafter a transition from $\ket j$ back to $\ket i$. Often, the details of how many times a pathway backtracks between two states are unimportant in extracting the essence of the mechanism. Furthermore, grouping together pathways that differ only by time-sequencing and backtracking greatly simplifies the extracted mechanism's major details and their assessment. To this end, it is particularly useful to define a \textit{Hermitian pathway class} as a set of pathways whose members differ only by time-sequencing and backtracking (a more rigorous definition is provided in Section \ref{sssec:H}). We will later encounter encodings which distinguish backtracking, but we will not consider encodings which distinguish time-sequencing in this paper; if time-sequencing information is desired, more comprehensive encodings can be utilized \cite{abhra_1}. 

The lowest-order (shortest) pathway in a pathway class is called its \textit{representative pathway}. There may be multiple pathways with the same minimal order in a pathway class, in which case one of these is arbitrarily chosen to represent the pathway class. We denote Hermitian pathway classes by their representative pathway in brackets with a superscript H, and the pathway amplitudes of Hermitian classes are denoted with an H in their superscript as $U_{ba}^{n(l_1,\dots,l_{n-1})\text{H}}$.\footnote{In previous works, Hermitian pathway classes were denoted with asterisks rather than H superscripts.} For example, \mbox{$\ket{1} \!\to\! \ket{2} \!\to\! \ket{3}$} and \mbox{$\ket{1} \!\to\! \ket{2} \!\to\! \ket{1} \!\to\! \ket{2} \!\to\! \ket{3}$} are both members of the Hermitian pathway class \mbox{$[\ket{1} \!\to\! \ket{2} \!\to\! \ket{3}]^\text{H}$}, but the latter pathway backtracks the \mbox{$\ket1\!\to\!\ket2$} transition. Another example of a pathway containing backtracking is given in green arrows in Figure \ref{fig:3l_ex_paths}.

\begin{figure}[t]\captionsetup{font=small}
    \centering
    \includegraphics[width=\linewidth]{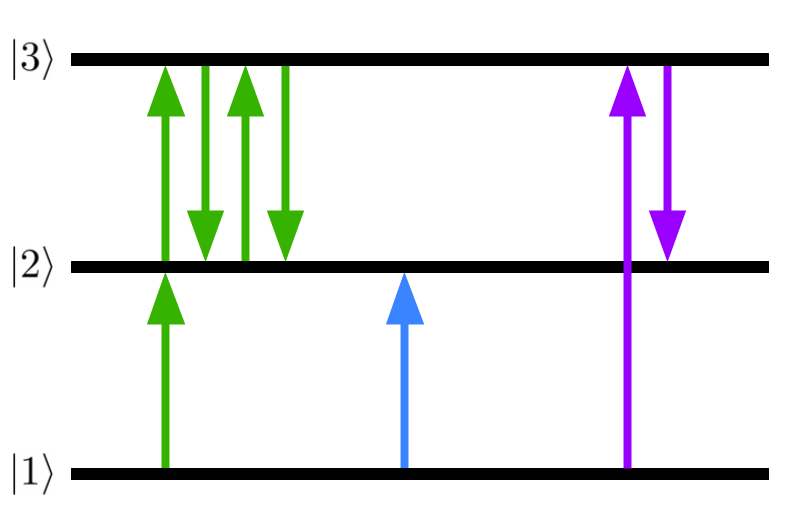}
    \caption{\justifying An illustrative three-level quantum system with three pathways shown between $\ket{1}$ and $\ket{2}$. The first (green) pathway is \mbox{$\ket 1 \!\to\! \ket 2 \!\to\! \ket 3 \!\to\! \ket 2 \!\to\! \ket 3 \!\to\! \ket 2$} which has amplitude $U^{5(2,3,2,3)}_{21}$ and makes two backtracking transitions between $\ket 2$ and $\ket3$. The second (blue) pathway is the direct transition \mbox{$\ket{1} \!\to\! \ket{2}$} and has amplitude $U^{1()}_{21}$. Both of these pathways are members of the Hermitian pathway class \mbox{$[\ket 1 \!\to\! \ket 2]^\text{H}$.} The third (purple) pathway is \mbox{$\ket 1 \!\to\! \ket 3 \!\to\! \ket 2$}, which has amplitude $U^{3(3)}_{21}$ and is the shortest pathway in the Hermitian pathway class \mbox{$[\ket 1 \!\to\! \ket 3 \!\to\! \ket 2]^\text{H}$.}}
    \label{fig:3l_ex_paths}
\end{figure}

\subsection{Fourier Hamiltonian Encoding}\label{ssec:fourier}
Each $n$th-order pathway amplitude $U_{ba}^{n(l_1,\dots,l_{n-1})}$ in Eq.~\eqref{eq:pathway_amps} may, in principle, be calculated directly via numerical integration, but as $n$ increases, the number of possible pathways grows exponentially due to the increasing number of intermediate states $(l_1,\dots,l_{n-1})$, so evaluating the integrals for every pathway individually becomes computationally prohibitive. Instead of directly computing these integrals, Hamiltonian encoding modulates the Hamiltonian with functions of an additional time-like parameter $s$ and decodes the evolution of the modulated system to extract the pathway amplitudes $U_{ba}^{n(l_1,\dots,l_{n-1})}$.

The most common implementation of Hamiltonian encoding is \textit{Fourier encoding}, in which each matrix element of the dipole is multiplied by a complex exponential in $s$ as follows:
\begin{equation} \label{eq:general_mod_mu}
    \mu_{ji} \rightarrow\mu_{ji}(s)\equiv e^{i\gamma_{ji}s}\mu_{ji},
\end{equation}
where each matrix element of $\mu$ is assigned a frequency $\gamma_{ji}$. Importantly, the modulation frequencies $\gamma_{ji}$ are strictly chosen for encoding purposes, and they have no relation to the transition frequencies $(E_j-E_i)/\hbar$ of the quantum system. The interaction Hamiltonian is thus modulated via the relation:
\begin{equation} \label{eq:general_mod}
    v_{ji}(t) \rightarrow v_{ji}(t; s) \equiv  e^{i\gamma_{ji}s}v_{ji}(t)
\end{equation}
which yields the modulated Schr\"odinger equation:
\begin{equation}\label{eq:mod_schr}
    \dv{U\!\,(t; s)}{t} =\! \pmqty{e^{i\gamma_{11}s}v_{11}\!\,(t) & \!\!\!\cdots\!\!\! & e^{i\gamma_{1d}s}v_{1d}(t) \\ \vdots & \!\!\!\ddots\!\!\! & \vdots \\ e^{i\gamma_{d1}s}v_{d1}\!\,(t) & \!\!\!\cdots\!\!\! & e^{i\gamma_{dd}s}v_{dd}(t)}\!U\!\,(t; s).
\end{equation}
As a result, the encoded amplitude for the pathway \mbox{$\ket a \!\to\! \ket{l_1} \!\to\! \cdots \!\to\! \ket{l_{n-1}} \!\to\! \ket b$} can be written as:
\begin{align*} \label{eq:modulated_pathway_amp}
     &\hspace{3.7pt}U^{n(l_1,\dots,l_{n-1})}_{ba}(s)\\ \notag
        &\equiv \int_0^T\int_0^{t_n}\cdots\int_0^{t_2} v_{bl_{n-1}}(t_n)e^{i\gamma_{bl_{n-1}}s}\cdots\notag \\ 
        & \hphantom{\int_0^T\int_0^{t_n}\cdots\int_0^{t_2}}\hspace{15pt}v_{l_1a}(t_1)e^{i\gamma_{l_1a}s}\dd{t_1}\cdots\dd{t_{n-1}}\dd{t_n}  \\ 
        &= U^{n(l_1,\dots,l_{n-1})}_{ba}\, e^{i\gamma^{n(l_1,\dots,l_{n-1})}_{ba}s} \nmberthis
\end{align*}
where 
\begin{equation} \label{eq:gamma_pathway}
    \gamma^{{n(l_1,\dots,l_{n-1})}}_{ba} \equiv \gamma_{bl_{n-1}} + \dots + \gamma_{l_1a},
\end{equation}
indicating that each pathway now is associated with a pathway frequency $\gamma^{{n(l_1,\dots,l_{n-1})}}_{ba}$. As a result, the modulated $\ket{a}$ to $\ket{b}$ probability amplitude can be written as follows:
\begin{equation} \label{eq:Abhra_linear_eq}
    U_{ba}(T;s) =\! \rlap{\phantom{\Big)}}\smash{\sum_{{\text{\shortstack{pathways\\$(l_1,\dots,l_{n-1})$}}}} }\!U^{n(l_1,\dots,l_{n-1})}_{ba}\! \cdot\! e^{i\gamma^{{n(l_1,\dots,l_{n-1})}}_{ba}s} \vspace{13pt}
\end{equation}
which, given sufficiently many samples in $s$ and distinct frequencies $\gamma^{{n(l_1,\dots,l_{n-1})}}_{ba}$, can be efficiently solved for the pathway amplitudes $U^{n(l_1,\dots,l_{n-1})}_{ba}$ via a fast Fourier transform. This process is laid out diagrammatically in Figure \ref{fig:flowchart}.

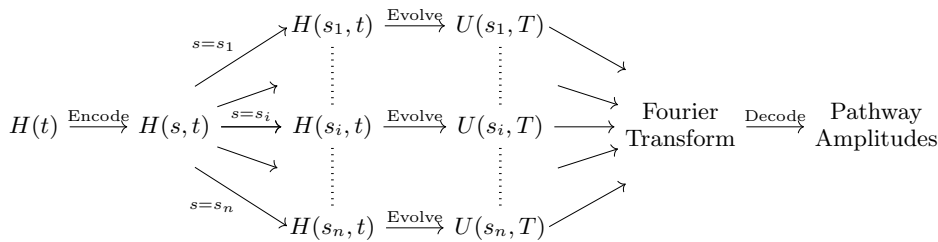
\begin{figure*}
    \centering
    \begin{tikzcd}
                                 &                                                                                & {H(s_1, t)} \arrow[r, "\textrm{Evolve}"] \arrow[d, no head, dotted, line width=1pt] & {U(s_1, T)} \arrow[d, no head, dotted, line width=1pt] \arrow[rd, shift left=4] &               &                           \\
H(t) \arrow[r, "\textrm{Encode}"] & {H(s,t)} \arrow[rd, shift right=4, "s=s_n"'] \arrow[ru, "s=s_1", shift left=4] \arrow[r, "s=s_i"] \arrow[r, shift left=2, end anchor={[yshift=.9ex]north west}] \arrow[r, shift right=2, end anchor={[yshift=-.9ex]south west}] \arrow[r] & {H(s_i, t)} \arrow[r, "\textrm{Evolve}"] \arrow[d, no head, dotted, line width=1pt] & {U(s_i, T)} \arrow[d, no head, dotted, line width=1pt] \arrow[r] \arrow[r, shift left=2, start anchor={[yshift=.9ex]north east}] \arrow[r, shift right=2, start anchor={[yshift=-.9ex]south east}]  & \makecell[c]{\textrm{Fourier} \\ \textrm{Transform}} \arrow[r, "\textrm{Decode}"] & \makecell[c]{\textrm{Pathway} \\ \textrm{Amplitudes}} \\
                                 &                                                                                & {H(s_n, t)} \arrow[r, "\text{Evolve}"]                            & {U(s_n, T)} \arrow[ru, shift right=4]                            &               &                          
\end{tikzcd}
    \captionsetup{font=small}\caption{\justifying A flowchart showing an outline of the Hamiltonian encoding process. The (time-dependent) Hamiltonian is modulated in an additional time-like parameter $s$. At many values of $s$, the modulated Schr\"odinger equation is integrated over time to capture the modulated system's evolution. The amplitudes of a large number of pathway classes are simultaneously extracted by decoding the evolution of these modulated systems via a fast Fourier transform.}
    \label{fig:flowchart}
\end{figure*}

\subsubsection{Pathway Classes}
Here we formally define the concept of a pathway class, which is fundamental to the encoding process. The specific choice of frequencies used in Fourier encoding (Eqs.~\eqref{eq:dyson_elements} and \eqref{eq:pathway_amps}) is called the \textit{encoding scheme}, or encoding for short. Under any given encoding, it is likely that multiple pathways share the same pathway frequency $\gamma^{n(l_1,\dots,l_{n-1})}_{ba}$. As a result, Eq.~\eqref{eq:Abhra_linear_eq} can be recast as the following:
\begin{align}
        U_{ba}(T;s)=&\! \sum_{{\text{\shortstack{pathways\\$(l_1,\dots,l_{n-1})$}}}} \!U^{n(l_1,\dots,l_{n-1})}_{ba} \cdot e^{i\gamma^{{n(l_1,\dots,l_{n-1})}}_{ba}s} \notag\\
        =&\ \sum_{\text{frequencies $\gamma$}} e^{i \gamma s } U_{ba}^{\gamma}\label{eq:actual_abhra_linear_eq}
    \intertext{where}
U_{ba}^\gamma \equiv& \sum _{{ \text{\shortstack{pathways\\$(l_1,\dots,l_{n-1})$ with\\[-2pt]frequency $\gamma$}}}} U_{ba}^{{n(l_1,\dots,l_{n-1})}}\label{eq:pathway_classes}\vspace{-15pt}
\end{align}

Pathways that share the same frequency $\gamma$ in an encoding are said to be in the same \textit{pathway class} under that encoding, and the amplitude $U_{ba}^\gamma$ of the pathway class is the sum of the individual pathway amplitudes in the class. Because Hamiltonian encoding yields pathway class amplitudes, it is critical that pathway classes are formed by grouping together pathways according to some physical property to extract useful mechanistic information. Section \ref{ssec:HNHPE} will explore non-Hermitian encoding, where pathways that differ only by the time-sequencing of transitions are in the same pathway class (i.e.~are not distinguished from each other), and Hermitian encoding, where all pathways that differ by backtracking and time-sequencing are in the same pathway class.

\subsubsection{Encoding Domains and Base \texorpdfstring{$B$}{B} Encoding}\label{sssec:domains}
Pathways may contain repeated transitions, so it is often useful to decompose a pathway frequency into individual modulation frequencies (basis frequencies). These basis frequencies will be labeled by integer indices $k \geq 1$ so that if the $k$th basis frequency is the modulation frequency of \mbox{$\ket i\! \to \!\ket j$} then $\gamma_k\equiv\gamma_{ji}$. Since pathway frequencies $\gamma^{n(l_1,\dots,l_{n-1})}_{ba}$ arise from a sum of individual basis frequencies (Eq.~\eqref{eq:gamma_pathway}), they can in general be decomposed as follows:
\begin{equation} \label{eq:basis_freq}
    \gamma^{n(l_1,\dots,l_{n-1})}_{ba} = \sum_k n_k\gamma_k.
\end{equation}
It is clear in this expansion that the $k$th transition is traversed $n_k$ times. This means that if the modulation frequencies $\gamma_k$ are chosen so that the coefficients $n_k$ can be uniquely reconstructed from $\gamma^{n(l_1,\dots,l_{n-1})}_{ba}$, the pathway class frequency can completely describe the number of times a pathway traverses each transition. If two pathways have the same pathway frequency, they will be indistinguishable in the encoding and their amplitudes will be added together in the Fourier transform. To distinguish $S$ different pathway class frequencies in Eq.~\eqref{eq:actual_abhra_linear_eq}, a Fourier transform needs at least $S$ sample points, so the runtime of an encoding is $O(S)$ numerical integrations of Eq.~\eqref{eq:mod_schr}. 

Although the Dyson series contains infinitely many terms (and therefore infinitely many pathways), in practice, the series converges and only a finite number of those pathways and pathway classes have a large enough amplitude to contribute significantly to the mechanism. To make this notion precise, we define some small amplitude magnitude threshold $\epsilon$; if the amplitude of a pathway class is larger in magnitude than $\epsilon$, then we call that pathway class \textit{significant}. To facilitate a meaningful analysis, $\epsilon$ should be set small enough to capture the essence of the mechanism but large enough to not waste computational resources; in Section {\ref{sssec:choosing_b}}, this magnitude threshold will be used to validate choices made in the encoding process.

In order to efficiently extract the mechanism associated with a control, we must make an initial assumption about which pathway classes can be significant (which can be later updated as described in Section \ref{sssec:choosing_b}). We call the set of pathway classes that an encoding distinguishes its \textit{encoding domain}. Every pathway class in an encoding domain is assigned a unique frequency, so with every additional pathway class in an encoding domain, an additional $s$-point is needed to keep Eq.~\eqref{eq:actual_abhra_linear_eq} completely determined. Increasing the number of pathway classes in a domain increases computation time while enabling a more detailed view of the underlying mechanism. Consequently, choosing a good encoding corresponds to choosing the smallest possible encoding domain that includes all significant pathway classes (as determined by the amplitude threshold $\epsilon$).

Once an encoding domain is chosen, modulation frequencies must be chosen so that all pathway classes in the domain are distinguishable, i.e.~no two pathway classes in the domain share the same pathway class frequency. For example, if the encoding domain is chosen to be, for some integer $B$, the set of all pathways that use each transition fewer than $B$ times, then each of the ${r}$ encoded transitions in $\mu$ may be given the frequency  $\gamma_k = {B}^{k-1}\gamma_0$ where $k=1, \dots, {r}$ and $\gamma_0$ is some chosen reference frequency. This is a natural way to encode pathway classes in the aforementioned domain \cite{koswara}. We call this choice of frequencies \textit{base $B$ encoding} and develop it further in Section \ref{ssec:HNHPE}.

Base $B$ encoding yields ${B}^{r}$ possible pathway frequencies in the encoding domain. Every pathway in the domain has a unique frequency up to time-sequencing, and in total, we require ${B}^{r}$ sample points in $s$. However, fast Fourier transforms are often performed on sets of $2^N$ evenly spaced sample points for some $N$ \cite{ct-fft}, i.e.~samples at $s = 2 \pi k/ 2^N$ for all $0 \leq k < 2^N$. In addition, $\gamma_0$ must be chosen such that $ B^r\gamma_0 \leq 2\pi$ so that all pathway class frequencies are less than $2\pi$. Both of these conditions can be satisfied by choosing $N = {\lceil  r\log_2 B\rceil}$ so that $2^N$ is the largest power of 2 no less than $B^r$:
\begin{equation}
    \gamma_0 = \frac{2\pi}{2^{\lceil  r\log_2 B\rceil}}.
\end{equation}
Therefore, encoding ${r}$ transitions with base $B$ takes $2^{\lceil  r\log_2 B\rceil}=O(B^r)$ numerical integrations of Eq.~\eqref{eq:mod_schr}. The process of appropriately choosing $B$ to minimize computational cost while preserving all essential mechanistic information is detailed in Section {\ref{sssec:choosing_b}}.

\subsection{Hermitian and Non-Hermitian Hamiltonian Encoding}\label{ssec:HNHPE}
The optimizations described in Section \ref{sec:OHPE_NHPE} consider Hermitian and non-Hermitian pathway classes. Sections \ref{sssec:H} and \ref{sssec:NH} will describe Hermitian and non-Hermitian pathway classes respectively in more detail than previous works \cite{abhra_1}, and Section \ref{sssec:HNHP} will explore the mechanistic information they each provide. Section \ref{sssec:choosing_b} will detail how $B$ should be chosen in base $B$ implementations of Hermitian and non-Hermitian encoding.
\subsubsection{Hermitian Encoding} \label{sssec:H}
For the modulated Hamiltonian in Eq.~\eqref{eq:mod_schr} to remain Hermitian, we use  \textit{Hermitian encoding}:
\begin{equation} \label{eq:hermitian_encoding}
\begin{matrix}
        v_{ji}(t) \rightarrow v_{ji}(t) e^{i\gamma_{ji}s}\\
        \gamma_{ij} = - \gamma_{ji}.
\end{matrix}
\end{equation}
Inspecting Eq.~\eqref{eq:gamma_pathway} with this frequency choice shows that any backtracking will cancel out in the sum for $\gamma^{{n(l_1,\dots,l_{n-1})}}_{ba}$, so the coefficient $n_k$ of $\gamma_k \equiv \gamma_{ji}$ in Eq.~\eqref{eq:basis_freq} describes the number of \mbox{$\ket i \!\to\! \ket j$} (forward) transitions minus the number of \mbox{$\ket j \!\to\! \ket i$} (backward) transitions of a pathway in that class (i.e.~the \textit{net} number of \mbox{$\ket i \!\to\! \ket j$} transitions). Therefore, all pathways that differ only by backtracking and time-sequencing share the same $\gamma^{{n(l_1,\dots,l_{n-1})}}_{ba}$ and they are grouped together under Hermitian encoding. We define the \textit{Hermitian pathway class} of a frequency $\gamma$ to be the set of all pathways that share $\gamma^{n(l_1,\dots,l_{n-1})}_{ba} = \gamma$ under Hermitian encoding (this set was called a ``composite pathway'' in previous works \cite{abhra_1,abhra_2,abhra_3,abhra_4}). We call the frequency of a pathway under a given Hermitian encoding its \textit{Hermitian frequency}. Because $\gamma_{ij} = -\gamma_{ji}$, all pathways that differ only by backtracking share the same Hermitian frequency, so this definition of Hermitian pathway classes is the same as the definition provided in \ref{ssec:pdef}.

The most commonly considered domain in Hermitian encoding is, for some positive integer ${m_0}$, the set of all pathways that use each transition a \textit{net} number of times that is smaller than or equal in magnitude to ${m_0}$ (i.e.~each transition may be used $-{m_0}, -{m_0}+1, \dots, -1, 0, 1, \dots, {m_0}-1,$ or ${m_0}$ times). To distinguish all pathway classes in this domain, we can use a modification of base $B$ encoding. In this encoding, we set $B=2{m_0}+1$ and encode the upper triangle of the Hamiltonian with $\gamma_k = B^{k-1} \gamma_0$, then encode the lower triangle with the negative values of these frequencies to preserve Hermiticity. This is called \textit{Hermitian base $B$ encoding}, and the domain represents the set of all pathways that use each transition a net number of times that is smaller than \mbox{$(B+1)/2$}. For this reason, in Hermitian base $B$ encoding, $B$ is often chosen to be odd so that the above interpretation is valid. Note that because negative frequencies are utilized in this encoding, the Hermitian frequencies extracted from the Fourier transform should be interpreted in the range $[-\pi, \pi)$ rather than $[0, 2\pi)$.

\subsubsection{Non-Hermitian Encoding} \label{sssec:NH}
By design, Hermitian pathway classes do not distinguish pathways that differ by time-sequencing or backtracking. To distinguish pathways that differ by backtracking, we relax the second condition in Eq.~\eqref{eq:hermitian_encoding} with a \textit{non-Hermitian encoding}:
\begin{equation} \label{eq:non_hermitian_encoding}
\begin{matrix}
        v_{ji} \rightarrow v_{ji} e^{i\gamma_{ji}s}\\
        \gamma_{ij} \neq -\gamma_{ji}.
\end{matrix}
\end{equation}
This relaxation means the modulated Hamiltonian (Eq.~\eqref{eq:mod_schr}) is no longer Hermitian. Even though this encoding preserves backtracking information, different pathways will still have the same pathway frequency if they differ by time-sequencing. Mirroring Hermitian pathway classes, we define the \textit{non-Hermitian pathway class} of $\gamma$ to be the set of all pathways that share $\gamma^{{n(l_1,\dots,l_{n-1})}}_{ba} = \gamma$ under non-Hermitian encoding. We denote non-Hermitian pathway classes with a superscript NH after their representative pathway, and the pathway amplitudes of non-Hermitian classes are denoted with an NH in their superscript as $U_{ba}^{n(l_1,\dots,l_{n-1})\text{NH}}$. Note that the pathways of any given non-Hermitian pathway class all contribute to the same Hermitian pathway class.

A convenient domain for non-Hermitian encoding is, for some integer $B$, the set of all pathways that use each transition fewer than $B$ times. To distinguish all pathway classes in this domain, we use the non-Hermitian base $B$ encoding described in Section \ref{sssec:domains}: for each forward or backward transition $k$, let $\gamma_k = B^{k-1} \gamma_0$.

\subsubsection{Comparison of Hermitian and Non-Hermitian Encoding} \label{sssec:HNHP}

The inclusion of backtracking and especially Rabi flopping mechanistic information in non-Hermitian encoding can lead to significant contributions from high-order pathways in which a single transition is traversed a large number of times, so non-Hermitian encoding often requires a larger $B$ than Hermitian encoding to capture all significant pathway classes.  This is seen in the example in Figure \ref{fig:3l_ex_paths}: with Hermitian encoding, these backtracking transitions cancel out, so $B$ can be made smaller, reducing the needed amount of $s$ sample points. Thus, non-Hermitian encoding is more expensive to execute than Hermitian encoding in two ways: $B$ must often be made larger to distinguish pathways with many repeated transitions, and ${r}$ is doubled due to the distinction between the frequencies of forward and backward transitions. As a result, the more detailed mechanistic information provided by non-Hermitian encoding comes at the expense of increased computational effort. Fortunately, despite the lack of backtracking information in Hermitian pathway classes, the information they provide is often sufficient to extract the essence of the mechanism in a system.

The procedures for decomposing $\gamma^{{n(l_1,\dots,l_{n-1})}}_{ba}$ into basis frequencies are distinct for Hermitian and non-Hermitian encodings. Succinctly, decomposing pathway class frequencies from a non-Hermitian base $B$ encoding into basis frequencies is equivalent to writing $\gamma^{{n(l_1,\dots,l_{n-1})}}_{ba}/\gamma_0$ numerically in base $B$ and reading off its digits. Similarly, decomposing pathway class frequencies from a Hermitian base $B$ encoding is equivalent to writing $\gamma^{{n(l_1,\dots,l_{n-1})}}_{ba}/\gamma_0$ numerically in a signed-digit representation such as \textit{balanced base $B$}, which uses the signed {digits} {$\Bqty{-\frac{B-1}{2},-\frac{B-1}{2} + 1,\dots, 0, \dots, \frac{B-1}{2} - 1, \frac{B-1}{2}}$} rather than the unsigned digits $\Bqty{0, \dots, B-1}$, and reading off the (signed) digits in this representation. Only odd bases have a balanced representation for the same reason that $B$ is often chosen to be odd for Hermitian encoding: $B$ must equal $2m_0 + 1$ for some $m_0$ for balanced base $B$ to exist as described above. Algorithms \ref{alg:coef_nherm_freq} and \ref{alg:coef_herm_freq} in Appendix \ref{apx:algo_FD} perform these conversions explicitly.

\subsubsection{Avoiding Aliasing with Self-Validating Base \texorpdfstring{$B$}{B} Encodings}\label{sssec:choosing_b}

Whenever Fourier encoding is performed, a set of modulation frequencies must be chosen, which in the case of base $B$ encoding reduces to a choice of $B$. If $B$ is made too large, many nonsignificant pathway classes will be encoded and excessive computation time will be wasted. If $B$ is made too small, \textit{aliasing} will occur between significant pathway classes. Aliasing is when two significant pathway classes that are physically distinct are indistinguishable in an encoding due to having the same frequency; this occurs only when there exists a significant pathway class outside the encoding domain. As an example, consider a pathway from $\ket 1$ to $\ket 2$ in a 3-level system with only ladder-climbing transitions under base 3 non-Hermitian encoding: $\gamma_{21} = \gamma_0$, $\gamma_{32} = 3\gamma_0$, $\gamma_{12} = 9 \gamma_0$, and $\gamma_{23} = 27\gamma_0$. The domain of this encoding is the set of all pathways that use each transition less than 3 times. If, despite being outside the domain, the pathway \mbox{$\ket1 \!\to\! \ket2 \!\to\! \ket1 \!\to\! \ket2 \!\to\! \ket1 \!\to\! \ket2 \!\to\! \ket1 \!\to\! \ket2$} is significant, we will encounter aliasing: the frequency of this pathway is $4\gamma_{21} + 3\gamma_{12} = 31\gamma_0$, but so is the frequency of the pathway \mbox{$\ket1 \!\to\! \ket2 \!\to\! \ket3 \!\to\! \ket2$} which is $\gamma_{21} + \gamma_{32} + \gamma_{23} = 31\gamma_0$, so the amplitude associated with frequency $31\gamma_0$ will end up being the sum of the two pathways' class amplitudes. We call a base \textit{admissible} if encoding with it does not lead to aliasing between significant pathways; thus, the smallest admissible base $B_{\min}$ is the minimal base $B$ such that all significant pathway classes in the mechanism are in the domain associated with $B$ (for non-Hermitian encoding this means that all pathways use any given transition less than $B$ times, and for Hermitian encoding this means that all pathways use any given transition a net number of times smaller than $\frac{B+1}{2}$). Note that since admissibility is defined in terms of significant pathways,  $B_{\min}$ depends strongly on the choice of $\epsilon$.

In principle, the best $B$ to use would be $B_{\min}$, but it is not possible to determine $B_{\min}$ \textit{a priori}. Instead, we aim to encode with a \textit{self-validating base} \mbox{$B> B_{\min}$}. Any base \mbox{$B > B_{\min}$} is called self-validating because it demonstrates that the domain of $B_{\min}$ distinguishes all significant pathways (i.e.~is admissible) by having nonsignificant amplitudes for all of its extremal pathway classes. If we assume that the set of significant pathway classes is connected,\footnote{We say that two pathway classes are neighbors if they differ by $-1$, 0, or 1 along each basis frequency, and a set of pathway classes is connected if it is path-connected under this topology.} having nonsignificant amplitudes for all extremal pathways for some base $B$ implies that \mbox{$B> B_{\min}$} and thus that the domain of $B$ (which contains the domain of $B_{\min}$) includes all significant pathway classes. To increase confidence that this assumption holds and the chosen base is greater than $B_{\min}$, $B$ may be made larger at the cost of increased computation time, but in practice this assumption usually holds and the optimal choice of $B$ for a given $\epsilon$ is the smallest self-validating base $B_{\min}+1$.

Due to this self-validating behavior, given a choice of $\epsilon$, there are two straightforward approaches to choosing $B$. The first approach starts with a small initial value of $B$ and repeatedly encodes the system with larger and larger $B$ until a self-validating base is reached, at which point the process terminates. The second approach starts with a large initial value of $B$ with the hope that $B$ will be immediately self-validating and only one encoding will be necessary; if $B$ is not immediately self-validating, $B$ is increased until it becomes self-validating (as in the first approach). The first approach is potentially more computationally efficient but requires multiple analyses; this paper utilizes the second approach, which sacrifices some computation time for increased confidence that the connectedness assumption holds.

As stated in Section \ref{sssec:H}, $B$ is usually chosen to be odd in Hermitian encoding so that it equals $2m_0+1$ for some $m_0$. For the examples illustrated in this paper,  $7=2\times 3 + 1$ was a good first guess for $B$ in Hermitian encoding: while it was common for significant Hermitian pathway classes to traverse a transition 2 net times, it was rarer for a significant pathway class to traverse a transition 3 net times, making this choice of $B$ often self-validating. Again, if the initial choice of $B$ is not self-validating, it should be increased as necessary. It is harder to find a good first guess for $B$ in non-Hermitian encoding because in this case $B_{\min}$ is strongly tied to the order of the Dyson series; often, in non-Hermitian encoding, a mix of both approaches to choosing $B$ must be taken. 

\section{Optimizing Hamiltonian Encoding}\label{sec:OHPE_NHPE}
This section describes novel algorithms for optimizing the encoding of pathway class amplitudes. First, the theoretical and practical sources of the original encodings' inefficiency are described. Then, two novel encoding schemes that address these inefficiencies will be proposed. The procedure to optimally perform Hermitian encoding of an arbitrary system (OHPE) is provided in Section \ref{ssec:OHPE}, and Section \ref{ssec:NHPE} adapts this procedure for optimizing non-Hermitian encoding (NHPE).

From a theoretical perspective, the source of the inefficiency of the original encodings is that many pathway classes in their domains did not contain any pathways, so the sum in Eq.~\eqref{eq:pathway_classes} was empty and $U_{ba}^\gamma$ was exactly 0. For example, consider the set of all pathways from $\ket 1$ to $\ket 2$ in a 3-level system with all transitions allowed. Apply base $B=7$ Hermitian encoding (where the domain comprises all pathways that use any given transition a net number of times smaller than 4) and encode the frequencies with $\gamma_{21} = \gamma_0$, $\gamma_{31} = 7\gamma_0$, $\gamma_{32} = 49\gamma_0$, and $\gamma_{ij}=-\gamma_{ji}$. The Hermitian frequency $\gamma = 49 \gamma_0$ corresponds to the pathway class composed of pathways that use the \mbox{$\ket 2 \!\to\! \ket 3$} transition 1 net time and every other transition 0 net times; inspecting these conditions reveals that no self-consistent pathway from $\ket 1$ to $\ket 2$ can satisfy them, so the pathway class is empty. Applying the same procedure to a non-Hermitian encoding scheme will reveal that non-Hermitian encodings also have many empty pathway classes. Each pathway class in an encoding has its own unique frequency and each frequency requires an $s$-point, and there are far more empty classes than nonempty classes in the original encodings (see Appendix \ref{apx:proof}); such cases lead to a disproportionately large number of unnecessary integrations of Eq.~\eqref{eq:mod_schr}. Optimized encoding domains should aim to include as few empty pathway classes as possible, reducing the required number of integrations of Eq.~\eqref{eq:mod_schr}. 

From a practical perspective, the source of the inefficiency of the original encodings was the inclusion of redundant information. Appendix \ref{apx:motiv_exam} demonstrates that modulating all transitions in the Hamiltonian is unnecessary and all pathway classes can be reconstructed from the information of a reduced set of transitions. Encoding fewer transitions means that fewer independent basis frequencies are needed, thereby making the encoding less expensive to execute.  

This paper introduces a Hermitian encoding scheme that does not distinguish any empty pathway classes while yielding the same mechanistic information as the original encoding. This scheme is also adapted for non-Hermitian encoding. The new procedure for Hermitian encoding is called Optimal Hermitian Pathway Encoding (OHPE) (Section \ref{ssec:OHPE}). Appendix \ref{apx:proof} proves that OHPE encodes a minimal subset of the system's transitions such that the set of allowed frequencies is in bijection with the set of nonempty pathway classes in the domain. The newly-introduced procedure for non-Hermitian encoding is called Non-Hermitian Pathway Encoding (NHPE) (Section \ref{ssec:NHPE}). Unlike OHPE, this procedure may not be optimal (some empty pathway classes may still be encoded), but it still provides a similarly significant improvement in computation time.

Each optimization presented in this section relies on a construct that we call the \textit{Hamiltonian graph}, which is built by constructing one vertex for each eigenstate of the Hamiltonian and one edge for every allowed transition (the nonzero elements of the dipole matrix). For NHPE, the Hamiltonian graph will be slightly modified into a \textit{directed Hamiltonian graph} so that forwards and backwards transitions are manifestly distinct: instead of having one undirected edge $e_{ji}$ for each transition \mbox{$\ket i \!\to\! \ket j$} with $i<j$, the directed Hamiltonian graph has one forward arc (directed edge) $e_{ji}$ and one backward arc $e_{ij}$ for each transition. The shape of this graph is the same as the undirected Hamiltonian graph, but it has twice as many edges (one for each direction).

We remark that the mathematical basis of OHPE (and NHPE) draws on graph theory concepts, including considering a decomposition of pathways in terms of cycles on the Hamiltonian graph (see Appendix \ref{apx:proof}). Even though the generalization and proof of this optimization both rely heavily on the notion of algebraic topological cycles, expressing pathways in terms of cycles rather than transitions requires a substantial shift in perspective, so the bulk of this paper will not discuss cycles. A detailed description of pathways in terms of cycles is presented in a motivating example and along with the proof of the OHPE's optimality in Appendices \ref{apx:motiv_exam} and \ref{apx:proof}, respectively, with additional discussion in Appendix \ref{apx:cycles}.

\subsection{Optimal Hermitian Pathway Encoding}\label{ssec:OHPE}

Here we will describe which matrix elements of the Hamiltonian should be modulated in OHPE for the optimal extraction of Hermitian pathway class amplitudes. With the modulated Hamiltonian, Eq.~\eqref{eq:mod_schr} is solved at a sufficient number of $s$-points to perform a fast Fourier transform and the amplitudes of all pathway class frequencies in the encoding domain are calculated. The Hamiltonian is modulated as follows:
\begin{enumerate}
    \item Generate the Hamiltonian graph with states as vertices and allowed transitions as edges.
    \item Generate a spanning tree of the Hamiltonian graph. A spanning tree of a graph is a subgraph that has no cycles and includes every vertex of the graph. 
    \item Encode only the transitions corresponding to edges \textit{not} included in the spanning tree such that $\gamma_{ji} > 0$ for $i < j$. The encoding frequencies for these transitions should each be unique with $\gamma_{ij}=-\gamma_{ji}$ and can be chosen using base $B$ encoding.
\end{enumerate}

After the modulated Schr\"odinger equation Eq.~\eqref{eq:mod_schr} is solved at a sufficient number of $s$-points, a fast Fourier transform of Eq.~\eqref{eq:actual_abhra_linear_eq} is performed to yield the pathway class amplitudes associated with each frequency; the amplitude of a frequency $\gamma$ is the sum of all pathway amplitudes in the Hermitian pathway class whose pathways have frequency \mbox{$\gamma^{n(l_1,\dots,l_{n-1})}_{ba} = \gamma$}. To interpret the resulting mechanism, these frequencies must be translated into pathway classes (represented by a shortest pathway in the class). If the quantum system is sufficiently simple, the structure of the most significant pathway classes is often immediately evident from their frequencies, so translating frequencies can be done by inspection as in the numerical illustrations (Section \ref{sec:numerical_illustration}). Algorithm \ref{alg:freq_path_OHPE} is provided in Appendix \ref{apx:algo_FTM} to automate frequency translation in cases where manual inspection is infeasible. This algorithm takes the form of a pre-processing step constructing a map which, given a frequency, $\gamma$, produces a shortest pathway whose frequency is $\gamma$. After translation, the amplitudes extracted from Eq.~\eqref{eq:actual_abhra_linear_eq} are readily seen to correspond to pathway classes and we can analyze the resulting mechanism.

\subsection{Optimized Non-Hermitian Pathway Encoding}
\label{ssec:NHPE}

Here we will describe which matrix elements of the Hamiltonian should be modulated in NHPE for the efficient extraction of non-Hermitian pathway class amplitudes. Because this section deals with non-Hermitian encoding, the \mbox{$\ket i \!\to\! \ket j$} and \mbox{$\ket j \!\to\! \ket i$} transitions will be treated separately. As in Section \ref{ssec:OHPE}, the procedure enumerated below selects which matrix elements of the Hamiltonian should be modulated in NHPE for the purpose of solving Eq.~\eqref{eq:mod_schr} and performing a fast Fourier transform. The procedure for NHPE is similar to that of OHPE (Section \ref{ssec:OHPE}):
\begin{enumerate}
    \item Generate the \textit{directed} Hamiltonian graph with states as vertices and allowed transitions as arcs (directed edges) in both directions.
    \item Generate a spanning tree of the directed Hamiltonian graph. This can be done by starting with a spanning tree of the undirected graph and including only forward arcs in the directed graph's spanning tree.
    \item Encode transitions $\ket i \!\to\! \ket j$ corresponding to arcs \textbf{not} included in the spanning tree. The encoding frequencies for these transitions should each be unique with $\gamma_{ij}\neq-\gamma_{ji}$  and can be chosen using base $B$ encoding.
\end{enumerate}
If the spanning tree is chosen with only forward arcs (as described in the second step), the lower triangular portion of the dipole matrix is modulated as in OHPE with unique frequencies for non-spanning-tree transitions, but all transitions in the upper triangle of the Hamiltonian are modulated with additional unique frequencies.

Just as with Hermitian pathway classes, it is standard to represent non-Hermitian pathway classes by the lowest-order pathway in their class. In this paper, translation of frequencies is done by inspection, but Algorithm \ref{alg:freq_path_NHPE} is provided in Appendix \ref{apx:algo_FTM} to automate the process. Again, this translation allows the amplitudes extracted from Eq.~\eqref{eq:actual_abhra_linear_eq} to readily correspond to pathway classes thereby enabling the analysis of the resulting mechanism.

\subsection{Computational Cost of OHPE and NHPE}
The computational cost of Hamiltonian encoding comes from the range of possible pathway class frequencies. If $k$ transitions are encoded in base $B$ encoding then there are $B\times B^{k-1} = B^k$ possible pathway class frequencies in the encoding. Each extra frequency to be distinguished requires an additional $s$-point, so reducing the number of encoded transitions is the easiest way to reduce computational cost. In a $d$-dimensional system with ${r}$ allowed transitions, the original Hermitian encoding \cite{abhra_1} encodes all ${r}$ transitions. There are \mbox{$d-1$} edges in a spanning tree, so OHPE encodes \mbox{${r}-(d-1)$} transitions; hence, the computational cost of OHPE is $B^{r-d+1}$ numerical integrations of Eq.~{\eqref{eq:mod_schr}}. The general formula for the computational cost reduction provided by OHPE for a given $B$ is:
\begin{equation}
  B^{{r}}/B^{{r}-d+1} =  B^{d-1}.
\end{equation}
This formula does not depend on the number of edges in the graph and indicates that the larger the dimension of the system, the more significant the advantage provided by OHPE. In the same system, the original non-Hermitian encoding \cite{abhra_1} encoded $2{r}$ transitions (forward and backward transitions are treated separately) while NHPE encodes $2{r}-(d-1)$ and thus requires $B^{2{r}-d+1}$ numerical integrations of Eq.~\eqref{eq:mod_schr}. The improvement provided by NHPE for a given $B$ is:
\begin{equation}
    B^{2{r}}/B^{2{r}-d+1} = B^{d-1}.
\end{equation}
These improvement factors depend strongly on $B$ (and exponentially on $d$); the practical goal is to operate at the minimal self-validating base $B_{\min}+1$, where $B_{\min}$ exactly captures all Hermitian pathway classes with magnitude greater than $\epsilon$. Because $B_{\min}$ depends on the chosen amplitude magnitude threshold $\epsilon$, $B$ should be considered an independent variable when determining the computational cost of an encoding. Moreover, since $d$ and $r$ are parameters that characterize the specific quantum system,\footnote{In physical systems, $r$ may take any value from $d-1$ to $\binom{d}{2}$. In systems with only nearest-neighbor transitions, $r=d-1$; in $n$-qubit systems (with $d=2^n$), $r=n2^{n-1}$; in molecular systems with all transitions available, \mbox{$r=\binom{d}{2}=\frac{d(d-1)}{2}$.}} they should be considered constant factors. Additionally, while the improvement factors for NHPE and OHPE share the same formula, non-Hermitian encoding often has a much larger $B_{\min}$ for a given $\epsilon$ than Hermitian encoding due to the inclusion of backtracking information. Thus, the factor of improvement for NHPE is in practice larger than that of OHPE for the same control, though not enough to offset the inherent higher cost of NHPE. 

\section{Numerical Illustrations}\label{sec:numerical_illustration}

In this section, the new mechanism analysis procedures are applied to two quantum systems chosen for illustrative purposes; further works will examine particular types of physical systems (e.g.~the mechanism of quantum gate generation) in detail. First, OHPE and NHPE are applied to a three-level system. Then, OHPE is applied to an eight-level system modeled after three coupled qubits. To contrast these quantitative applications of encoding-based mechanism analysis, we will compare their results against the qualitative information that can be deduced from standard mechanistic references such as the systems' state population plots and the controls' frequency-time spectrograms.
%The section concludes with a general assessment of the computational cost reductions provided by OHPE and NHPE for arbitrary quantum systems.

Eq.~\eqref{eq:mod_schr} is solved numerically by approximating the control field $\varepsilon(t), t \in [0, T]$ as piecewise constant in time:
\begin{align} \label{eq:numerical_schr}
    U(T; s) &\approx \prod_{n=1}^{T/\Delta t} \exp \qty( -\frac{i}{\hbar} V_I(n\Delta t; s) \Delta t)  ,\nmberthis\\
    \intertext{where}
    V_I(n\Delta t; s) &=  -e^{iH_0n\Delta t/\hbar} \mu(s) e^{-iH_0n\Delta t/\hbar} \varepsilon(n\Delta t). \label{eq:numer_interaction_hamiltonian}
\end{align}
and the encoded dipole $\mu(s)$ is modulated as in Eq.~\eqref{eq:general_mod_mu}. Because the modulation given by OHPE yields a Hermitian modulated Hamiltonian, the matrix exponentials in Eq.~\eqref{eq:numerical_schr} may be efficiently calculated by diagonalizing the dipole $\mu(s)$ \mbox{\textit{a priori}}; in contrast, NHPE yields a non-Hermitian modulated Hamiltonian, so other techniques such as scaling and squaring \cite{exp} must be used to stably compute the matrix exponentials.

\subsection{Three-Level System}\label{ssec:3l-numil}

The first system considered is an arbitrary 3-level system with a Hamiltonian $H = H_0 - \mu\varepsilon(t)$ described in atomic units as:
\begin{equation} \label{eq:H0_3HF}
    H_0 = \mqty(\dmat[0]{0, 0.0082, 0.016})
\end{equation}
with
\begin{equation}\label{eq:Hc_3HF}
    \mu = \mqty(0 & 0.061 & -0.013 \\ 0.061 & 0 & 0.083 \\ -0.013 & 0.083 & 0).
\end{equation}
We will analyze an optimal control pulse produced by gradient ascent \cite{GRAPE} to perform population transfer from $\ket 1$ to $\ket 3$. Note that although \mbox{$\ket1\!\to\!\ket2$} and \mbox{$\ket2\!\to\!\ket3$} are nearly degenerate transitions in the field-free Hamiltonian, the dipole coupling elements break that symmetry to readily permit full control.

The population plot of Figure \ref{fig:3L_pop_graph} qualitatively suggests a fairly complicated mechanism---all three states have continued interactions throughout the duration of the pulse---but it is difficult to glean a more detailed description of the mechanism from this plot. Additionally, due to the short timescale of the control, a spectrogram of the control field fails to yield any useful information about the mechanism (and is therefore not pictured). The quantitative nature of encoding-based mechanism analysis draws a sharp contrast against the modest qualitative results above; using OHPE and NHPE, a quantitative and detailed description of the mechanism via pathways is presented below.

\begin{figure}[t]\captionsetup{font=small}
    \includegraphics[width=\linewidth]{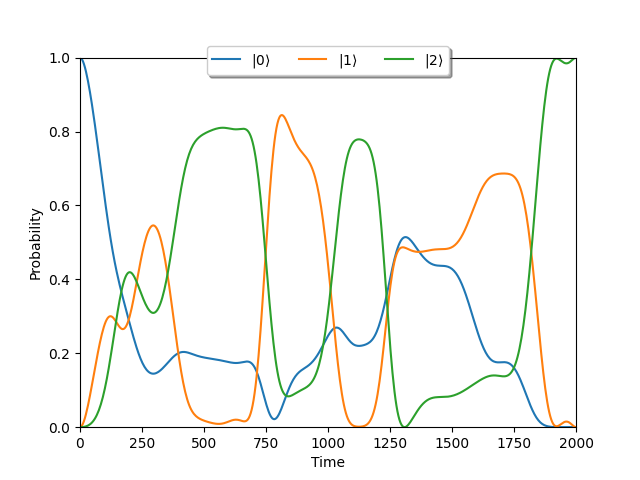}
    \caption{\justifying Populations of all states in the three-level system described by Eqs.~\eqref{eq:H0_3HF} and \eqref{eq:Hc_3HF} under the effect of the control performing a $\ket 1$ to $\ket{3}$ population transfer.}
    \label{fig:3L_pop_graph}
\end{figure}

\subsubsection{Hermitian Pathway Class Analysis}\label{ssec:3l-ohpe}

Here we analyze the mechanism underlying the controlled dynamics in Figure \ref{fig:3L_pop_graph} using OHPE. A graph representation of this system is provided in Figure \ref{subfig:3L_graph_H}. The three states and three transitions form a complete graph on three vertices (with three edges). In this graph, any two edges form a spanning tree, and the third edge is the \textit{only} transition modulated. As the choice of spanning tree is arbitrary, modulating any one of the three edges would reveal the same information. We choose the spanning tree that includes the \mbox{$\ket 1 \!\to\! \ket 2$} and \mbox{$\ket 2 \!\to\! \ket 3$} transitions as depicted in Figure \ref{subfig:3L_graph_H}, so the modulated transition is \mbox{$\ket 1 \!\to\! \ket3$.} With base $B=7$ encoding, the only modulation frequency is $\gamma_{31} = 1\gamma_0$ with $\gamma_0 = \frac{2\pi}8$ ($B$ manifests itself in the denominator of $\gamma_0$: 8 is the largest power of 2 no less than $B^{r-d+1}=7$).

\begin{figure}[t]\captionsetup{font=small}
    \centering
         \begin{subfigure}[b]{0.22\textwidth}
         \centering
         \includegraphics[width=\textwidth]{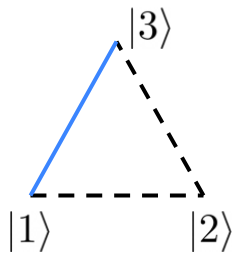}
         \caption{\justifying  The Hamiltonian graph of the 3-level system under OHPE.}
         \label{subfig:3L_graph_H}
     \end{subfigure}
     \hfill
     \begin{subfigure}[b]{0.22\textwidth}
         \centering
         \includegraphics[width=\textwidth]{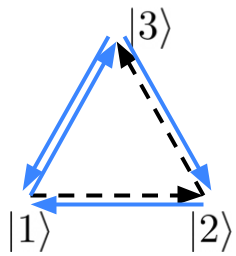}
         \caption{\justifying The directed Hamiltonian graph of the 3-level system under NHPE.}
         \label{subfig:3L_graph_NH}
     \end{subfigure}
    \hspace{-6pt}
    \vspace{2pt}
    
    \caption{\justifying The Hamiltonian graph for the three-level system under OHPE and NHPE. The spanning trees are depicted in dashed black. All other transitions (depicted in solid blue) are modulated.}
    \label{fig:3L_graph_HNH}
\end{figure}

Once the encoding is chosen, the $s$-sampling and fast Fourier transform can proceed as described in Section \ref{ssec:fourier}. When this is done, we are left with a list of Hermitian frequencies and their corresponding complex pathway class amplitudes. The next step is to convert the Hermitian frequencies with significant amplitudes into representative pathways for their corresponding Hermitian pathway classes. The representative pathways for these classes are constructed by inspection: due to the simplicity of this system, we can see that each net transition of the encoded edge corresponds to traversing a loop around the graph, and the number of traversals can be used to construct a representative pathway of the corresponding Hermitian pathway class. This idea is explored in more detail in Appendix \ref{apx:motiv_exam}. 

The most significant Hermitian pathway classes in the system's evolution are depicted in Table \ref{table:3lh}. Complex pathway amplitudes can be plotted as arrows in the complex plane, enabling the visualization of the constructive and destructive interference between pathway classes. The pathway amplitudes from Table \ref{table:3lh} are plotted in this manner in Figure \ref{fig:3l_arrow_h}.

\begin{table*}[t]\captionsetup{font=small}
\centering
\begin{tabular}{c l c c} 
 \hline
 \hline
 \addlinespace[2pt]$\gamma^{n(l_1,\dots,l_{n-1})}_{ba}/\gamma_0$ & Hermitian Pathway Class & Magnitude & Phase \\ [0.5ex] 
 \hline
 \addlinespace[2pt] 0 & $[\ket{1} \!\to\! \ket{2} \!\to\! \ket{3}]^\text{H}$ & 0.744 & 40$^\circ$  \\
  \addlinespace[2pt] -1 & $[\ket{1} \!\to\! \ket{2} \!\to\! \ket{3} \!\to\! \ket{1} \!\to\! \ket{2} \!\to\! \ket{3}]^\text{H}$ & 0.169 & 32$^\circ$ \\
 \addlinespace[2pt] 1 & $[\ket{1} \!\to\! \ket{3}]^\text{H}$ & 0.127 & 37$^\circ$ \\
 \addlinespace[2pt] -2 & {$[\ket{1} \!\to\! \ket{2} \!\to\! \ket{3} \!\to\! \ket{1} \!\to\! \ket{2} \!\to\! \ket{3} \!\to\! \ket{1} \!\to\! \ket{2} \!\to\! \ket{3}]^\text{H}$}  & 0.053 & 246$^\circ$ \\ 
  \addlinespace[2pt] 2 & $[\ket{1} \!\to\! \ket{3} \!\to\! \ket{2} \!\to\! \ket{1}\! \!\to\! \ket{3}]^\text{H}$ & 0.036 & 344$^\circ$ \\
 $\cdots$ & & $\cdots$ &$\cdots$ \\
 \hline
 Sum &  & 1.000 & 36$^\circ$ \\ 
 \hline
 \hline
\end{tabular}
\caption{\justifying Frequencies, corresponding Hermitian pathway classes, amplitude magnitudes, and amplitude phases of the three-level system driven by the control as seen in Figure \ref{fig:3l_arrow_h}. All significant pathway classes with magnitude greater than $\epsilon=0.01|U_{31}(T)|$ are listed; the bottom sum row includes additional (nonsignificant) pathway classes not shown and is equal to $U_{31}(T)$ exactly.}
\label{table:3lh}
\end{table*}

\begin{figure}\captionsetup{font=small}
    \includegraphics[width=\linewidth]{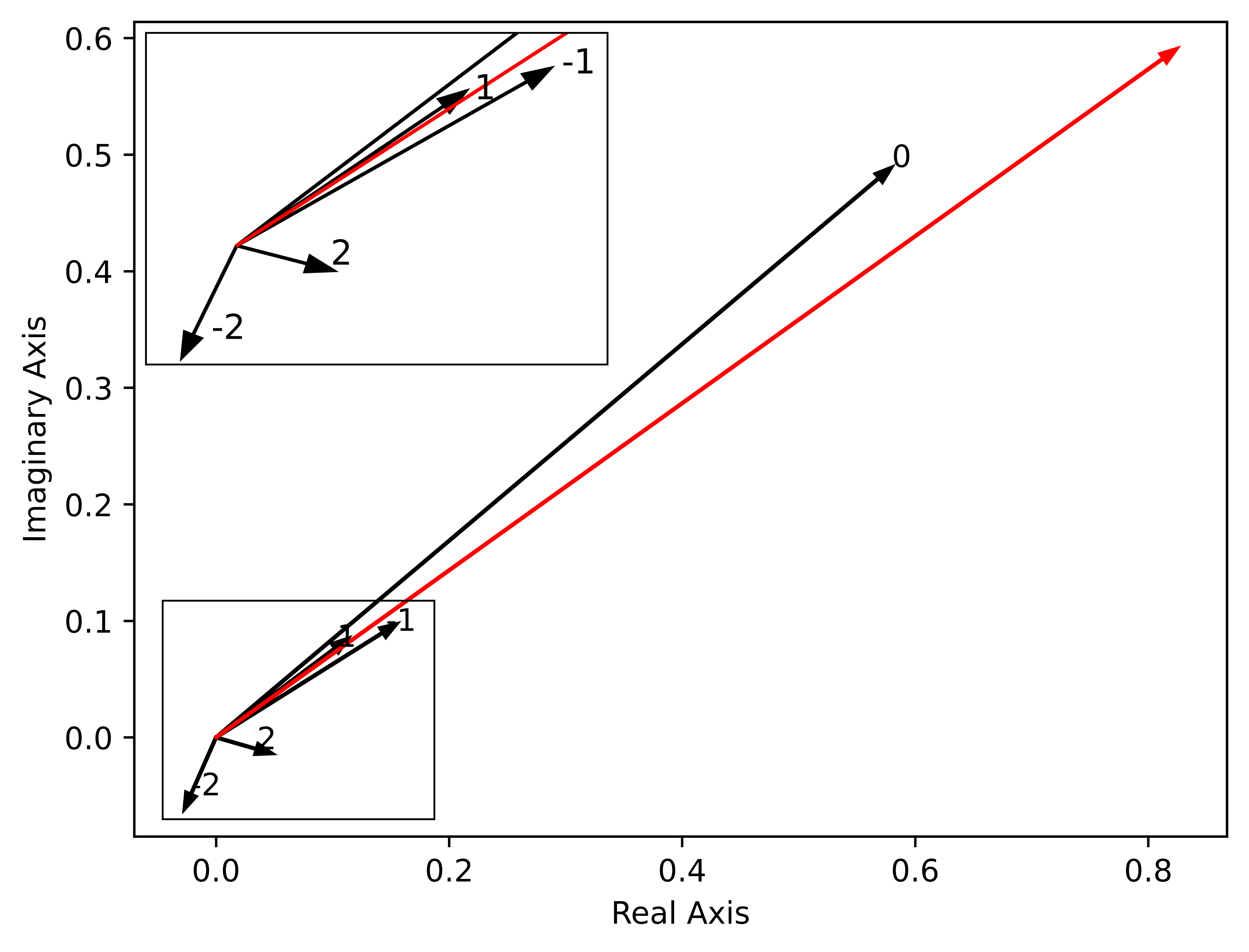}
    \caption{\justifying The Hermitian pathway class amplitudes of the controlled three-level system. Each amplitude is drawn as an arrow labeled with its associated frequency $\gamma^{{n(l_1,\dots,l_{n-1})}}_{ba}/\gamma_0$. All amplitudes with magnitude greater than \mbox{$\epsilon=0.01|U_{31}(T)|$} are displayed and labeled, and the inset expands the lower-left corner of the complex plane. The red arrow $U_{31}(T)$ is the sum of all pathway class amplitudes and has a magnitude close to $1$. The most significant pathways in the mechanism are expanded in Table \ref{table:3lh}.}
    \label{fig:3l_arrow_h}
\end{figure}

The Hermitian pathway class that uses only ladder-climbing transitions, \mbox{$[\ket{1} \!\to\! \ket{2} \!\to\! \ket{3}]^\text{H}$}, has largest pathway amplitude as seen in Table \ref{table:3lh}. This fact was not immediately obvious from the population plots in Figure \ref{fig:3L_pop_graph}. The two next most significant pathway classes are \mbox{$[\ket{1} \!\to\! \ket{2} \!\to\! \ket{3} \!\to\! \ket{1} \!\to\! \ket{2} \!\to\! \ket{3}]^\text{H}$} and \mbox{$[\ket 1 \!\to\! \ket 3]^\text{H}$}, which have comparable magnitudes to each other but are both much less significant than \mbox{$[\ket{1} \!\to\! \ket{2} \!\to\! \ket{3}]^\text{H}$} (likely due to the $\mu_{31}$ component being small compared to $\mu_{21}$ and $\mu_{32}$). After some point, as the complexity of Hermitian pathway classes increases, the magnitudes of their pathway amplitudes decrease. For example, the pathway classes with frequencies $-2\gamma_0$ and $2\gamma_0$ play a very small role in the mechanism. This, along with the lack of $3\gamma_0$ and $-3\gamma_0$ pathway classes with magnitude greater than $\epsilon=0.01|U_{31}(T)|$, shows that $B_{\min} = 5$ and thus $B = 7$ is the smallest self-validating odd base.

\subsubsection{Non-Hermitian Pathway Class Analysis}\label{ssec:3l-nhpe}

Here we analyze the mechanism underlying the controlled dynamics in Figure \ref{fig:3L_pop_graph} using NHPE. The directed Hamiltonian graph of this system is illustrated in Figure \ref{subfig:3L_graph_NH}. While the undirected Hamiltonian graph had 3 edges, the directed graph has 6 arcs. The same spanning tree that was used for the undirected Hamiltonian graph (Figure \ref{subfig:3L_graph_H}) is used for the directed graph. While OHPE only modulates one (undirected) transition in this system, NHPE requires four (directed) transitions to be modulated with different frequencies. A base of $B=16$ was chosen,\footnote{$\gamma_{31}=1\gamma_0$, $\gamma_{12}=16\gamma_0$, $\gamma_{13}=16^2\gamma_0$, and $\gamma_{23}=16^3\gamma_0$. $\gamma_0$ is equal to $\frac{2\pi}{2^{16}}$.} implying an encoding domain where each transition must be used less than 16 times. This base was chosen after first attempting to use $B=7$ and $B=10$, neither of which were self-validating with $\epsilon = 0.05|U_{31}(T)|$; performing a mechanism analysis with $B=16$ demonstrates that $B_{\min} = 13$, so $B=16$ is self-validating.

Modulating the Hamiltonian and extracting the non-Hermitian pathway class amplitudes yields a much larger list of significant pathway classes than in Hermitian encoding: 536 pathways with magnitude greater than \mbox{$\epsilon = 0.05|U_{31}(T)|$}. Table \ref{table:nh} details some significant pathway classes and illustrates the backtracking present in them. Figure \ref{fig:3l_arrow_nh} plots the complex pathway amplitudes of the most significant pathway classes. The non-Hermitian pathway class amplitudes depicted in Figure {\ref{fig:3l_arrow_nh}} fan out in all directions and have magnitudes much larger \mbox{than 1.} As NHPE gives a finer-grained mechanism analysis than OHPE, NHPE will provide a more qualitatively complex mechanism. This occurs because, due to the larger number of distinct pathway classes in finer-grained analyses, more destructive interference between pathway classes is necessary to reach the final amplitude (whose magnitude is at most $1$); this destructive interference between large-amplitude pathway classes of almost evenly-distributed phases yields the characteristic fan-out shape of Figure {\ref{fig:3l_arrow_nh}} and induces the large value of $B_{\min}=13$. Because non-Hermitian pathway classes are strict subsets of their respective Hermitian pathway classes, each pathway class amplitude in Section {\ref{ssec:3l-ohpe}} (i.e.~each entry of Table {\ref{table:3lh}} and each arrow of Figure {\ref{fig:3l_arrow_h}}) is the sum of a disjoint set of pathway class amplitudes in this analysis (i.e.~in Table {\ref{table:nh}} or Figure {\ref{fig:3l_arrow_nh}}).

The most significant non-Hermitian pathway class, with frequency $12304\gamma_0 = \gamma_{12} + 3\gamma_{23}$, magnitude $487$, and phase $285^\circ$, is \mbox{$[\ket1 \!\to\! \ket2 \!\to\! \ket1 \!$}\linebreak[4]\mbox{$\to\! \ket2\!\to\! \ket3 \!\to\! \ket2 \!\to\! \ket3 \!\to\! \ket2 \!\to\! \ket3 \!\to\! \ket2 \!\to\! \ket3]^{\text{NH}}$.} Including this class, the 22 most significant non-Hermitian pathway classes all come from the \mbox{$[\ket{1} \!\to\! \ket{2} \!\to\! \ket{3}]^\text{H}$} Hermitian pathway class (the most significant in Table \ref{table:3lh} and Figure {\ref{fig:3l_arrow_h}}); non-Hermitian pathway classes from other Hermitian pathway classes are also present in the mechanism, but they are generally less significant than the major non-Hermitian classes from the \mbox{$[\ket{1} \!\to\! \ket{2} \!\to\! \ket{3}]^\text{H}$} Hermitian class. The largest non-Hermitian pathway class not from \mbox{$[\ket{1} \!\to\! \ket{2} \!\to\! \ket{3}]^\text{H}$} is the 23rd most significant non-Hermitian class, with frequency \mbox{$1604\gamma_0 = \gamma_{31}+\gamma_{12}+4\gamma_{23}$}, magnitude 66, and phase $223^\circ$, is \mbox{$[\ket1 \!\to\! \ket2 \!\to\! \ket1 \!\to\! \ket3 \!\to\! \ket2 \!$}\linebreak[4]\mbox{$\to\! \ket3 \!\to\! \ket2 \!\to\! \ket3 \!\to\! \ket2 \!\to\! \ket3 \!\to\! \ket2 \!\to\! \ket3]^{\text{NH}}$}, which is from the \mbox{$[\ket 1 \!\to\! \ket 3]^{\text H}$} Hermitian pathway class.

\begin{figure}[t]\captionsetup{font=small}
    \includegraphics[width=\linewidth]{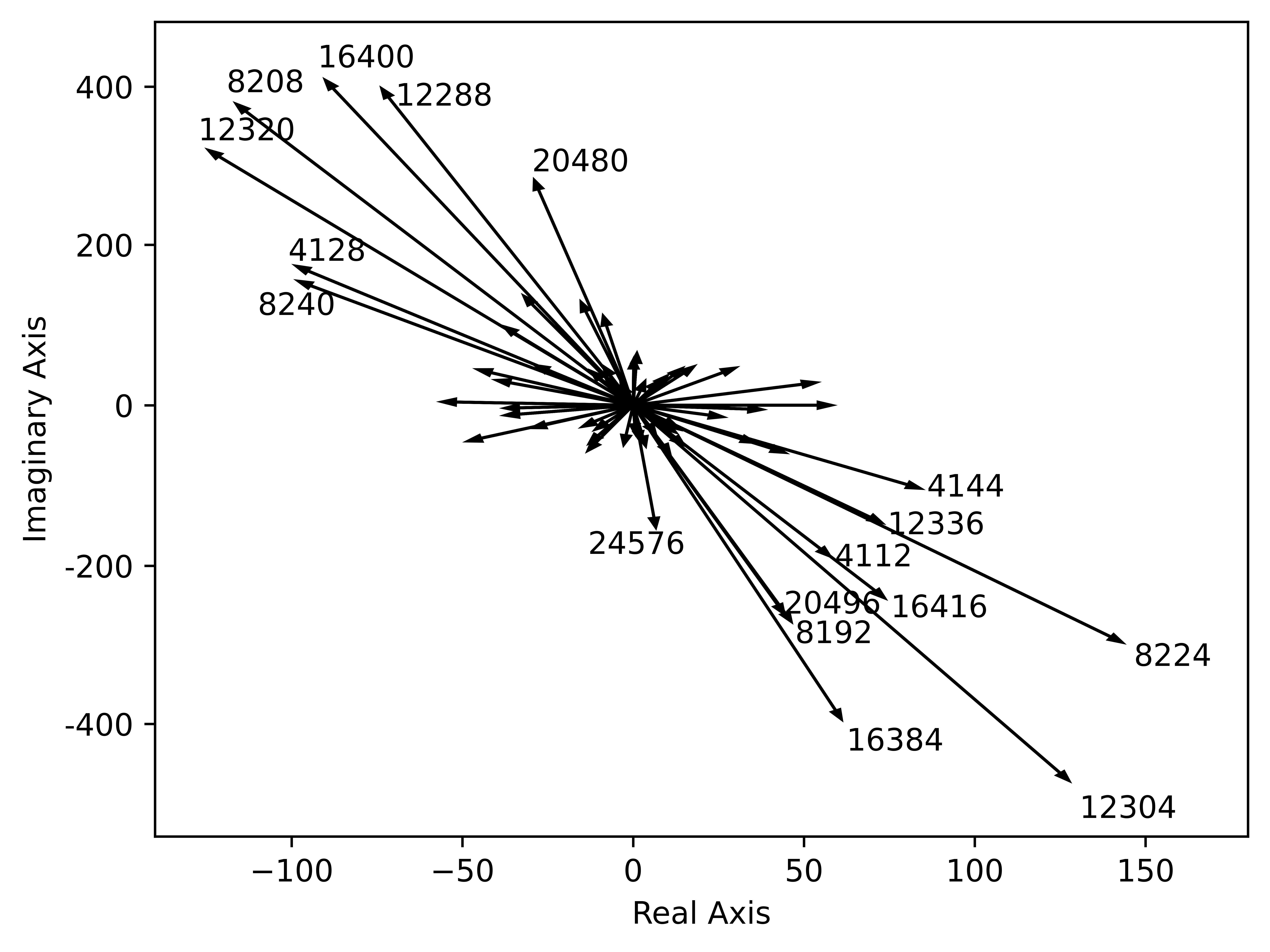}
        \caption{\justifying The non-Hermitian pathway class amplitudes of the three-level system driven by the control. Each amplitude is drawn as an arrow labeled with its associated frequency $\gamma^{{n(l_1,\dots,l_{n-1})}}_{ba}/\gamma_0$. There are too many significant pathways to legibly display in the figure, so only amplitudes with magnitude greater than $25$ are displayed and many significant pathways of lower magnitude are unlabeled. The sum of all pathway amplitudes $U_{31}(T)$ has magnitude close to $1$ and would not be visible if drawn on these axes. A few significant non-Hermitian pathway classes are expanded in Table \ref{table:nh}.}
    \label{fig:3l_arrow_nh}
\end{figure}

\begin{table*}\captionsetup{font=small}
\centering
\begin{tabular}{c c l c c} 
 \hline\hline
 \addlinespace[2pt]$\gamma^{n(\ldots)}_{ba}/\gamma_0$ & Decomposition & Non-Hermitian Pathway Class & Magnitude & Phase \\ [0.5ex] 
 \hline 
 \addlinespace[2pt]12288 & $3\gamma_{23}$ & {\small$[\ket{1} \!\to\! \ket{2} \!\to\! \underbrace{\ket{3} \!\to\! \ket{2} \!\to\! \ket{3} \!\to\! \ket{2} \!\to\! \ket{3} \!\to\! \ket{2} \!\to\! \ket{3}}]^\text{NH}$} & 403 & 100$^\circ$\\
 \addlinespace[2pt]8208 & $2\gamma_{23}+1\gamma_{12}$ & {\small$[\ket{1} \!\to\! \underbrace{\ket{2} \!\to\! \ket{1} \!\to\! \ket{2}} \!\to\! \underbrace{\ket{3} \!\to\! \ket{2} \!\to\! \ket{3} \!\to\! \ket{2} \!\to\! \ket{3}}]^\text{NH}$} & 394 & 107$^\circ$\\
 \addlinespace[2pt]4218 & $1\gamma_{23}+2\gamma_{12}$ & {\small$[\ket{1} \!\to\! \underbrace{\ket{2} \!\to\! \ket{1} \!\to\! \ket{2} \!\to\! \ket{1} \!\to\! \ket{2}} \!\to\! \underbrace{\ket{3} \!\to\! \ket{2} \!\to\! \ket{3}}]^\text{NH}$} & 200 & 119$^\circ$ \\
 \addlinespace[2pt]4112 & $1\gamma_{23}+1\gamma_{12}$ & {\small$[\ket{1} \!\to\! \underbrace{\ket{2} \!\to\! \ket{1} \!\to\! \ket{2}} \!\to\! \underbrace{\ket{3} \!\to\! \ket{2} \!\to\! \ket{3}}]^\text{NH}$} & 197 & 287$^\circ$ \\
 $\cdots$ & & & $\cdots$ & $\cdots$ \\
 \hline
 Sum & & & 1.000 & 36$^\circ$\\
 \hline \hline
\end{tabular}
\caption{\justifying Frequencies $\gamma^{n(l_1,\dots,l_{n-1})}_{ba}/\gamma_0$ in base $B=16$, basis frequency decompositions, corresponding non-Hermitian pathway classes, amplitude magnitudes, and amplitude phases of the controlled three-level system. Only four of many significant pathways are listed; these pathways were chosen for their short length. As in Table \ref{table:3lh}, the bottom sum row includes all pathway classes and is equal to $U_{31}(T)$ exactly. Parts of the pathway that correspond to backtracking have been under-braced, and removing the under-braced backtracking demonstrates that all four listed non-Hermitian classes contribute to the $[\ket{1} \!\to\! \ket{2} \!\to\! \ket{3}]^\text{H}$ Hermitian pathway class.}
\label{table:nh}
\end{table*}

NHPE enabled the mechanism of this three-level system with 6 allowed transitions (considering forwards and backwards transitions separately) to be fully extracted with only 4 modulated transitions. With 4 modulated transitions in base 16, Eq.~\eqref{eq:mod_schr} had to be solved $16^4$ times. Without NHPE, two more transitions would need to be modulated and Eq.~\eqref{eq:mod_schr} would need to have been solved at $16^6$ points, which is a factor of $16^2 =256$ improvement in computation time.

\subsection{Three-Qubit System}\label{ssec:3q-numil}

The second system analyzed here consists of three coupled qubits. The Hamiltonian is given by $H = H_0 + H_c(t)$. The field-free Hamiltonian takes the form:
\begin{align*} \label{eq:H0_3q_s}
    &H_0 = \omega_1 (I_z\! \otimes\! \mathbb{I}\! \otimes\! \mathbb{I}) \!+\! \omega_2 (\mathbb{I}\! \otimes\! I_z\! \otimes\! \mathbb{I}) \!+\! \omega_3 (\mathbb{I}\! \otimes\! \mathbb{I}\! \otimes\! I_z)\notag  \\
    &+\!J_{12} (I_z\! \otimes\! I_z\! \otimes\! \mathbb{I})\! +\! J_{13} (I_z\! \otimes\! \mathbb{I}\! \otimes\! I_z)\! +\! J_{23} (\mathbb{I}\! \otimes\! I_z\! \otimes\! I_z)\nmberthis
\end{align*}
where $I_z$ is the Pauli spin operator in the $z$ direction, $\mathbb I$ is the $2 \times 2$ identity matrix, $\omega_i$ are the resonance offset frequencies, and $J_{ji}$ are the spin couplings.\footnote{$\omega_1=2\pi \cdot 12039.6$, $\omega_2=2\pi \cdot -6855.5$, and $\omega_3=2\pi \cdot -12039.0$. The J-couplings are $J_{12}=2\pi \cdot 54$, $J_{13}=2\pi \cdot -1.3$, and $J_{23}=2\pi \cdot 35$.} The control Hamiltonian takes the form:
\begin{align} \label{eq:Hc_3q_s}
    H_c(t) &= \mu_1\varepsilon_1(t)  + \mu_2\varepsilon_2(t) \\
    \mu_1 &= I_x \otimes \mathbb I \otimes \mathbb I + \mathbb I \otimes I_x \otimes \mathbb I + \mathbb I \otimes \mathbb I \otimes I_x     \\
     \mu_2 &= I_y \otimes \mathbb I \otimes \mathbb I + \mathbb I \otimes I_y \otimes \mathbb I + \mathbb I \otimes \mathbb I \otimes I_y
\end{align}
where $I_x$ and $I_y$ are the Pauli spin operators in the $x$ and $y$ directions. $\mu_1$ and $\mu_2$ have the same structure, so when modulating the Hamiltonian, elements in the same position in both dipole matrices are multiplied by the same complex exponential. It is possible to encode the two dipole matrices separately to reveal which dipole's coupling matrix elements induce particular transitions in a pathway, but that direction is beyond the illustrative goals of this example.

The control used on this system performs an $X$ gate on the third qubit, transferring the population from $\ket{000}$ to $\ket{001}$. In this case, the state population plot (Figure \ref{fig:3q_pop_graph}) resulting from the two control pulses hints at certain aspects of the mechanism. In the middle of the population plot, there are clear peaks of states $\ket{100}$ and $\ket{101}$; from these two peaks, it is reasonable to hypothesize that the Hermitian pathway class \mbox{$[\ket{000} \!\to\! \ket{100} \!\to\! \ket{101} \!\to\! \ket{001}]^\text{H}$} might contribute significantly to the mechanism. In addition, there is a sharp uptick in the population of $\ket{011}$ near $t=0.0027$, suggesting that many pathways ending in a \mbox{$\ket{011} \!\to\! \ket{001}$} transition may be contributing. While it is possible to make such hypotheses, a truly quantitative understanding of the mechanism underlying the controlled dynamics is difficult to extract from population plots like Figure \ref{fig:3q_pop_graph} alone; furthermore, the spectrogram of the control field (not pictured) does not yield any mechanistic information that is not already evident in the population plot, so combining these standard mechanistic references still yields only a vague qualitative description of the mechanism. Beyond Hermitian pathway classes, this population plot also contains many short-lived dips and spikes in its curves. These features are most likely the result of Rabi flopping and can only be analyzed with non-Hermitian encoding. A non-Hermitian mechanism analysis of this system would be computationally expensive, so this simulation was not performed. 

\begin{figure}[hbt]\captionsetup{font=small}
    \includegraphics[width=\linewidth]{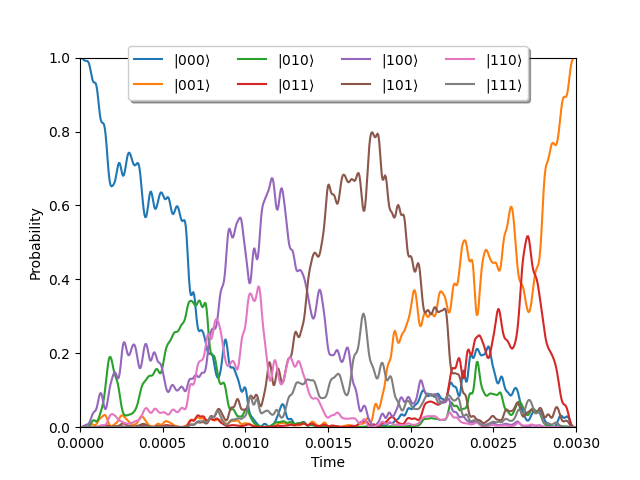}
    \caption{\justifying Populations of all states in the three-qubit system described by Eqs.~\eqref{eq:H0_3q_s} and \eqref{eq:Hc_3q_s} under the effect of a control performing an $X$ gate on the third qubit corresponding to transferring the population from $\ket{000}$ to $\ket{001}$, with $|U_{001,000}(T)|^2 = 0.99989$.}
    \label{fig:3q_pop_graph}
\end{figure}

\subsubsection{Hermitian Pathway Class Analysis}

The first step of OHPE is converting the system into a graph, which in this case is a cube; the Hamiltonian graph and spanning tree are presented in Figure \ref{fig:3q_system_graph}. The system has 12 allowed transitions, but using OHPE, modulating 5 transitions is enough to fully analyze Hermitian pathway classes. The Hamiltonian was encoded in base \mbox{$B=7$} Hermitian encoding;\footnote{$\gamma_{011,010}=1\gamma_0$, $\gamma_{110,010}=7\gamma_0$, $\gamma_{111,011}=7^2\gamma_0$, $\gamma_{101,100}=7^3\gamma_0$, and $\gamma_{111,110}=7^4\gamma_0$. $\gamma_0$ is equal to $\frac{2\pi}{2^{15}}$.} with \mbox{$\epsilon = 0.02|U_{001,000}(T)|$}, $B_{\min} = 5$ and $B=7$ is the smallest odd self-validating base.

\begin{figure}[hbt]\captionsetup{font=small}
    \begin{center}
        \includegraphics[width=0.3\textwidth]{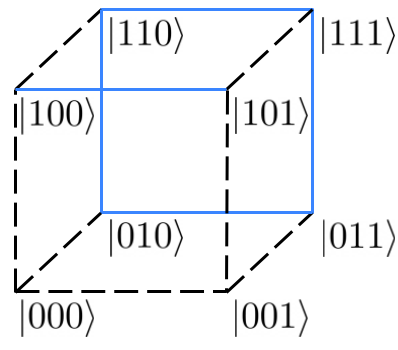}
    \end{center}
    \caption{\justifying The Hamiltonian graph of the three-qubit system is a cube. The vertices of the graph are labeled with their corresponding states. The spanning tree is depicted in dashed black. All other transitions (depicted in solid blue) are modulated.}
    \label{fig:3q_system_graph}
\end{figure}

The complex amplitudes of the Hermitian pathway classes are plotted in Figure \ref{fig:3q_arrow_h} and expanded in Table \ref{table:3qh}, and two example pathways are drawn in Figure \ref{fig:3q_example}. The population plot in Figure \ref{fig:3q_pop_graph} is suggestive that the \mbox{$[\ket{000} \!\to\! \ket{100} \!\to\! \ket{101} \!\to\! \ket{001}]^\text{H}$} pathway contributes significantly to the mechanism. This inference is consistent with the mechanism analysis: this pathway class (with frequency $343\gamma_0$) has the largest individual magnitude. In addition, as expected, most significant pathway classes end with either a \mbox{$\ket{101} \!\to\! \ket{001}$} transition or a \mbox{$\ket{011} \!\to\! \ket{001}$} transition. Beyond these insights based on the population plots, a Hamiltonian encoding mechanism analysis facilitates a quantitative understanding of mechanism which is much more detailed than a population plot could provide. For instance, we can see that despite the fact that the Hermitian pathway classes with frequencies $2352\gamma_0$ and $2359\gamma_0$ have pathways that only differ by one intermediate state, the former is much more significant than the latter; this distinction is not possible to predict from the population plot alone.  

\begin{table*}[ht]\captionsetup{font=small}
    \centering
    \begin{small}
        \begin{tabular}{c c l c c} 
     \hline\hline
     $\gamma^{n(\ldots)}_{ba}/\gamma_0$ & Decomposition & Hermitian Pathway Class & \!Magnitude\! & Phase \\ [0.5ex] 
     \hline
     \addlinespace[2pt] 343 & {$1\gamma_{101,100}$} & {\footnotesize$[\ket{000} \!\to\! \ket{100} \!\to\! \ket{101} \!\to\! \ket{001}]^\text{H}$} & 0.454 & 13$^\circ$\\
     \addlinespace[2pt] 2359 & {\!$1\gamma_{111,110}\! -\!1\gamma_{111,011}\! + \!1\gamma_{110,010}$\!}  & {\footnotesize$[\ket{000} \!\to\! \ket{010} \!\to\! \ket{110} \!\to\! \ket{111} \!\to\! \ket{011} \!\to\! \ket{001}]^\text{H}$} & 0.206 & 17$^\circ$\\
     \addlinespace[2pt] 350 & {$1\gamma_{101,100} + 1\gamma_{110,010}$} & {\footnotesize$[\ket{000} \!\to\! \ket{010} \!\to\! \ket{110} \!\to\! \ket{100} \!\to\! \ket{101} \!\to\! \ket{001}]^\text{H}$} & 0.134 & 346$^\circ$\\
     \addlinespace[2pt] 1 & {$1\gamma_{011,010}$} & {\footnotesize$[\ket{000} \!\to\! \ket{010} \!\to\! \ket{011} \!\to\! \ket{001}]^\text{H}$} & 0.089 & 299$^\circ$ \\
     \addlinespace[2pt] 2352 & {$1\gamma_{111,110} -1\gamma_{111,011}$} & {\footnotesize$[\ket{000} \!\to\! \ket{100} \!\to\! \ket{110} \!\to\! \ket{111} \!\to\! \ket{011} \!\to\! \ket{001}]^\text{H}$} & 0.065 & 108$^\circ$ \\   
      \addlinespace[2pt] 0 & 0 & {\footnotesize$[\ket{000} \!\to\! \ket{001}]^\text{H}$} & 0.063 & 338$^\circ$  \\ 
      $\cdots$ & & & $\cdots$ & $\cdots$ \\
     \hline
      Sum &  & &  1.000 & 0$^\circ$ \\
      \hline \hline
    \end{tabular}
    \end{small}
    \caption{\justifying Frequencies $\gamma^{n(l_1,\dots,l_{n-1})}_{ba}/\gamma_0$ in base $B=7$ encoding, basis frequency decompositions, corresponding pathways, pathway amplitude magnitudes, and pathway amplitude phases of the three-qubit system driven by the control from $\ket{000}$ to $\ket{001}$. All pathway classes with magnitude greater than \mbox{$\epsilon=0.05|U_{001,000}(T)|$} are listed. As in Tables \ref{table:3lh} and \ref{table:nh}, the bottom sum row includes all pathway classes and is equal to $U_{001,000}(T)$ exactly. The magnitude of the sum of all pathway amplitudes is $0.99994=|U_{001,000}(T)|$.} %TODO Fix this...
    \label{table:3qh}
\end{table*}

\begin{figure}[t]\captionsetup{font=small}
    \includegraphics[width=\linewidth]{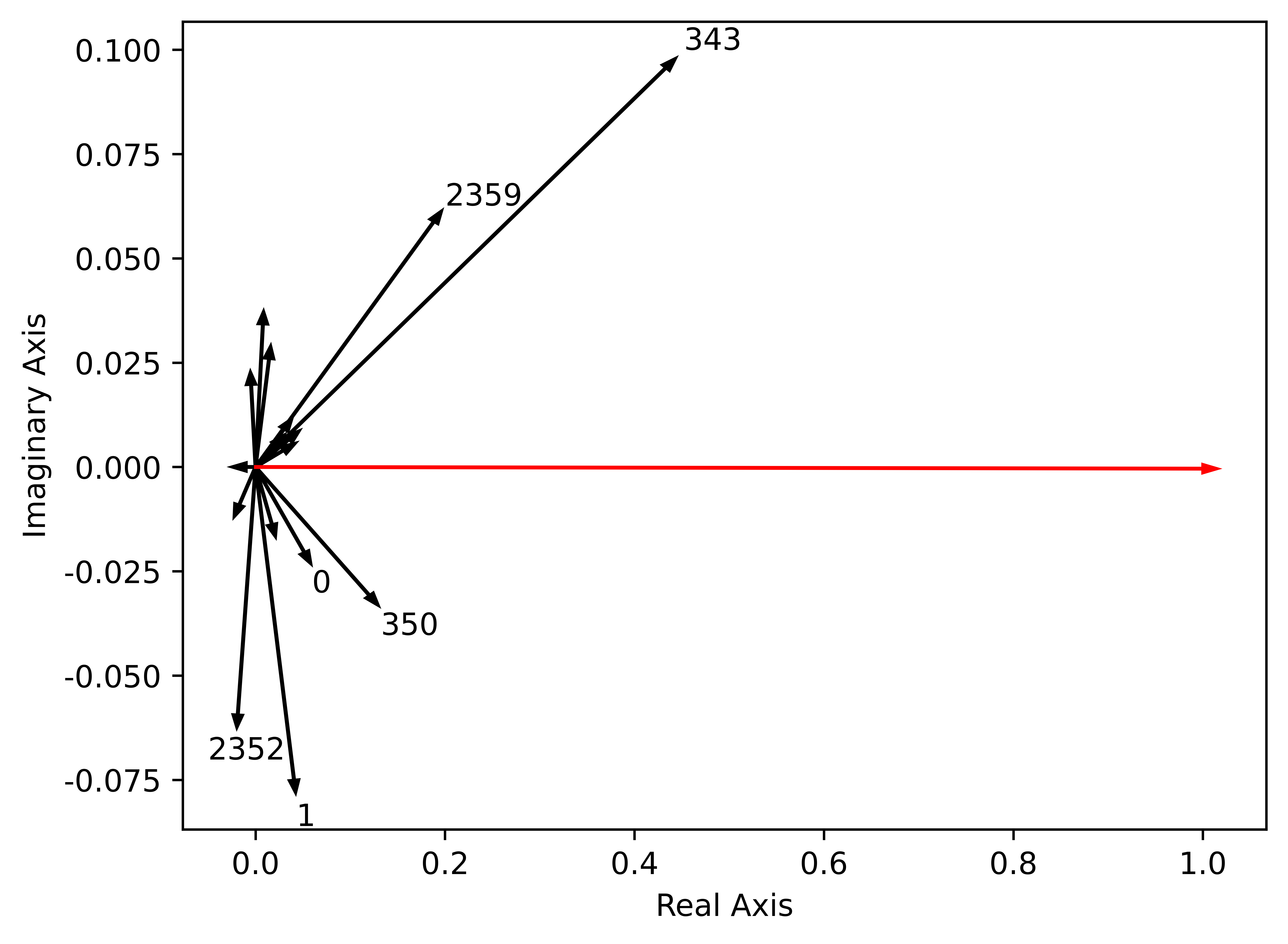}
    \caption{\justifying The Hermitian pathway class amplitudes of the controlled three-qubit system. Each amplitude is drawn as an arrow labeled with its associated frequency $\gamma^{{n(l_1,\dots,l_{n-1})}}_{ba}/\gamma_0$; pathway classes with magnitude greater than $0.05|U_{001,000}(T)|$ are labeled, and amplitudes with magnitude greater than \mbox{$\epsilon=0.02|U_{001,000}(T)|$} are displayed. The red arrow is the sum of all pathway class amplitudes and has a magnitude equal to \mbox{$|U_{001,000}(T)|=0.99994$}. Taking into account the different axis scales, there is strong constructive interference between the significant pathway classes. The most significant Hermitian pathway classes are expanded in Table \ref{table:3qh}.}
    \label{fig:3q_arrow_h}
\end{figure}

\begin{figure}[t]\captionsetup{font=small}
    \centering
         \begin{subfigure}[b]{0.22\textwidth}
         \centering
         \includegraphics[width=\textwidth]{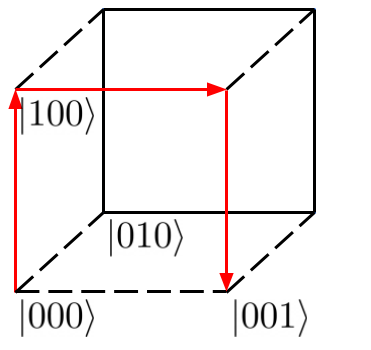}
         \caption{\justifying $\gamma^{n(l_1,\dots,l_{n-1})}_{ba}\!/\gamma_0 = 343 $}
         \label{subfig:p1}
     \end{subfigure}
     \hfill
     \begin{subfigure}[b]{0.22\textwidth}
         \centering
         \includegraphics[width=\textwidth]{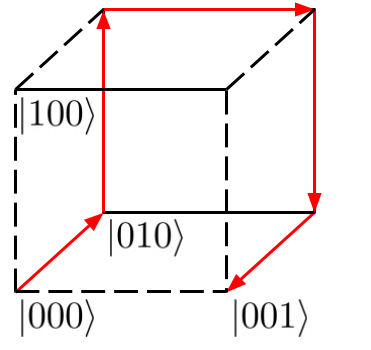}
         \caption{\justifying $\gamma^{n(l_1,\dots,l_{n-1})}_{ba}\!/\gamma_0 = 2359$}
         \label{subfig:p2}
     \end{subfigure}
    \hspace{-6pt}
    \vspace{2pt}
    \caption{\justifying The two most significant Hermitian pathway classes from $\ket{000}$ to $\ket{001}$ as seen in Table \ref{table:3qh} are illustrated on the graph of the three-qubit system. Spanning tree edges are marked in dashed black, encoded edges are in solid black, and the pathway is depicted in red arrows.}
    \label{fig:3q_example}
\end{figure}

In summary, although the population plot in Figure \ref{fig:3q_pop_graph} is sufficient to provide a qualitative picture of a portion of the mechanism, detailed quantitative information is not obtainable from the figure. In contrast, the encoding-based analysis enables a detailed description of the system's mechanism. In the three-level system of Section \ref{ssec:3l-numil}, the population plot provides very little mechanistic insight. In such cases, a full encoding-based mechanism analysis is the only viable method for obtaining a detailed picture of the constructive and destructive interference between the pathways involved in the dynamics.

\section{Conclusion}\label{sec:conclusion}
Hamiltonian encoding is a foundational tool for quantifying the mechanism underlying controlled quantum dynamics through the extraction of pathway amplitudes \cite{abhra_1, abhra_2, abhra_3, abhra_4}. Pathway classes encapsulate the contributions from many similar pathways into a single amplitude, and the interplay of pathway classes can be further interpreted through the lens of the constructive and destructive interference of their class amplitudes. The original encoding techniques \cite{abhra_1} for mechanism analysis were very computationally expensive and quickly became infeasible for systems with a large number of states. OHPE and NHPE provide an exponential improvement in the computational cost of mechanism analysis by exploiting topological properties of the set of allowed transitions in the quantum system. The body of this paper described these techniques and demonstrated that they offer substantial improvements in computational complexity without losing any mechanistic information. For a $d$-level system with ${r}$ allowed transitions and a given encoding base $B$, OHPE improves the cost of Hermitian encoding from $O(B^{{r}})$ to $O(B^{{r}-d+1})$ while NHPE improves the cost of non-Hermitian encoding from $O(B^{2{r}})$ to $O(B^{2{r}-d+1})$. Appendices \ref{apx:proof} and \ref{apx:cycles} prove the validity of OHPE by reinterpreting Hermitian pathway classes as describing sums of cycles on the Hamiltonian graph through the lens of algebraic topology, and Appendix \ref{apx:NHPE} extends these techniques to non-Hermitian encoding to prove the validity of NHPE.

The broad applicability of mechanism analysis may be further extended by generalizing the techniques of OHPE to more specialized analysis procedures. For example, Hamiltonian encoding has been successfully demonstrated in the laboratory \cite{rdc} where modulations are introduced via the control pulse instead of via the dipole moment, and future work will seek to use OHPE to design a more efficient procedure for deducing pathway amplitudes directly in the laboratory. Furthermore, Dyson-like expansions such as the Peano-Baker series \cite{Baake2011} appear often in solutions to first-order ordinary differential equations of the form $\dv{t}\vec{x}(t) = A(t) \vec x(t) + B(t) \vec u(t)$; it may be possible to use similar techniques to OHPE and NHPE to define and analyze the mechanism underlying such equations either by directly modulating $A(t)$ and decoding the resulting series solution. Another possible opportunity arises upon converting non-Hamiltonian linear differential equations into an equivalent Hamiltonian structure \cite{An2023, Jin2023} which again may be amenable to mechanism analysis.

Future works may also seek to improve unoptimized and unexplored aspects of OHPE and/or NHPE. Both of these algorithms currently take the spanning tree as input; it is possible that choosing spanning trees deliberately rather than arbitrarily may enable a lower base $B$ to be used. In addition, while NHPE provides an exponential reduction in computational cost over the original non-Hermitian encoding, its optimality has not been proven and more efficient non-Hermitian encoding schemes may exist. Finally, when systems are too large to be completely analyzed, it may be useful to encode and study subsets of the system; further research is needed to assess whether optimized encoding schemes like OHPE and NHPE are viable for such purposes.

One particularly promising application of these encoding techniques is in the analysis of qubit gates in the quantum information sciences. Due to the sparsity of qubit systems, OHPE and NHPE are particularly effective in optimizing their analysis. The mechanism underlying quantum gates can be analyzed to better understand why a given control has or does not have properties such as robustness to noise. Once the mechanistic patterns that are intrinsic to these properties are understood, the improved computation time of OHPE may make it possible to directly encourage these mechanistic patterns during the design of optimal controls. Future works will explore the outlined directions above.

\section*{Acknowledgements}
Michael Kasprzak acknowledges support from the Princeton Program in Plasma Science and Technology (PPST). Herschel Rabitz acknowledges partial support for theoretical research from the Department of Energy (DOE, Grant\# DE-FG-02ER15344). Tak-San Ho and Gaurav Bhole acknowledge partial support from the Department of Energy (DOE) STTR (Grant\# DE-SC0020618). Erez Abrams acknowledges the contributions of Edward Feng, Anthony Li, and Alexandra Volkova for drawing his attention to the relevance of spanning trees to mechanism analysis.

\section*{Appendices}
\appendix
\renewcommand\thefigure{\thesection\arabic{figure}}    
\setcounter{figure}{0}

% \appendixtitleon    % puts \appendixname before each appendix title
\section{Motivating Example for Optimal Hermitian Pathway Encoding}\label{apx:motiv_exam}

Optimal Hermitian Pathway Encoding (OHPE) takes a simplified form in quantum systems whose Hamiltonian graphs have only one cycle. In such systems, the intuition behind OHPE is clear and the concepts learned can be generalized to arbitrary systems through the steps outlined in Section \ref{ssec:OHPE} and Appendix \ref{apx:proof}.

For illustration, a special case of OHPE is derived for a three-level quantum system identical to the system analyzed in Section \ref{ssec:3l-numil}. Consider the set of all pathways from $\ket{1}$ to $\ket{2}$ on the system depicted in Figure \ref{subfig:tri_1_graph}. The simplest possible pathway in this system is the direct transition \mbox{$\ket{1} \!\to\! \ket{2}$.} Higher-order pathways can traverse the graph in more complex ways; for instance, the pathway \mbox{$\ket{1} \!\to\! \ket{2} \!\to\! \ket{3} \!\to\! \ket{1} \!\to\! \ket{2}$} loops around the triangle counterclockwise and back to $\ket{1}$ before making the final transition to $\ket{2}$. In general, a pathway can loop around the graph counterclockwise multiple times yielding a pathway of the form \mbox{$(\ket{1} \!\to\! \ket{2} \!\to\! \ket{3}  \!\to\! )^k \ket{1} \!\to\! \ket{2}$} for $k \geq 0$, where the exponent represents how many times the pathway loops around the triangle. Additionally, a pathway can do this backward, looping around the graph in a clockwise direction, yielding pathways of the form \mbox{$(\ket{1} \!\to\! \ket{3} \!\to\! \ket{2} \!\to\! )^{-k} \ket{1} \!\to\! \ket{2}$} for $k \leq 0$ which have a single backtrack at the end. Each of these pathways is in a unique Hermitian pathway class, so we will now consider the set of Hermitian pathway classes defined by these pathways (one pathway class for each $k \in \ZZ$). For $k \geq 0$, these pathway classes are represented by the shortest pathway in their class, which is the pathway listed above, so the pathway class is represented as \mbox{$[(\ket{1} \!\to\! \ket{2} \!\to\! \ket{3} \!\to\! )^k \ket{1} \!\to\! \ket{2}]^\text{H}$.} For $k<0$, a simplification can be made: the $k<0$ pathways listed include backtracking, so the $k<0$ pathway classes are represented by their shortest pathways as \mbox{$[(\ket{1} \!\to\! \ket{3} \!\to\! \ket{2} \!\to\! )^{-k-1} \ket{1} \!\to\! \ket{3} \!\to\! \ket{2}]^\text{H}$.}

\begin{figure}[t]\captionsetup{font=small}
\centering

    \begin{subfigure}[b]{0.23\textwidth}
         \centering
         \includegraphics[width=\textwidth]{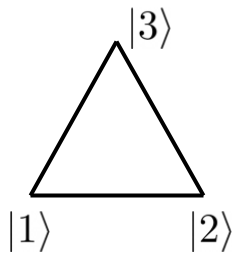}
         \caption{\justifying {Hamiltonian graph of the \phantom{(a)\,}three-level system}}
         \label{subfig:tri_1_graph}
     \end{subfigure}
     \hfill
     \begin{subfigure}[b]{0.23\textwidth}
         \centering
         \includegraphics[width=\textwidth]{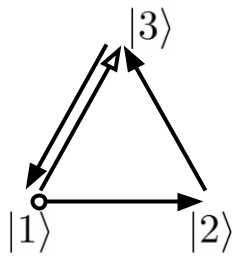}
         \caption{\justifying \mbox{$\!\ket{1} \!\to\! \ket{2} \!\to\! \ket{3} \!\to\! \ket{1} \!\to\! \ket{3}\!$} \phantom{text to align via wrap}}
     \end{subfigure}
    \hspace{-6pt}
    \vspace{2pt}
    \caption{\justifying The pathway \mbox{$\ket{1} \!\to\! \ket{2} \!\to\! \ket{3} \!\to\! \ket{1} \!\to\! \ket{3}$} on the three-level system starting with an open circle and ending with an open arrow. This is equivalent to the cycle \mbox{$\ket{1} \!\to\! \ket{2} \!\to\! \ket{3} \!\to\! \ket{1}$ }concatenated with \mbox{$\ket{1} \!\to\! \ket{3}$.} The Hermitian pathway class of this pathway can be represented by a simpler pathway: two final transitions cancel resulting in the pathway \mbox{$\ket{1} \!\to\! \ket{2} \!\to\! \ket{3}$.}}
\end{figure}

The key to optimizing Hermitian encoding is the understanding that we have just defined the set of \textit{all} possible Hermitian pathway classes in the system from $\ket{1}$ to $\ket{2}$. We can describe this set of pathway classes using a \textit{default pathway} and a cycle. We will choose \mbox{$\ket{1} \!\to\! \ket{2}$} as the default pathway in this system because it is the simplest pathway between the initial and final states. The default pathway on its own represents one Hermitian pathway class; all other Hermitian classes can be described by adding cycles to the default pathway. In this system, the only cycle is \mbox{$\ket{1} \!\to\! \ket{2} \!\to\! \ket{3} \!\to\! \ket{1}$,} which can be either traversed in the order described (arbitrarily defined to be forward) or in reverse (backward) some number of times before traversing the default pathway. We will define the set of pathway classes in this system in terms of an integer $k$, where if $k\geq 0$ then the pathway traverses the cycle forward $k$ times before the default pathway and if $k < 0$ then the pathway traverses the cycle backward $-k$ times before the default pathway. For example, the $k=1$ pathway class is \mbox{$[\ket{1} \!\to\! \ket{2} \!\to\! \ket{3} \!\to\! \ket{1} \!\to\! \ket{2}]^\text{H}$} and the $k=-1$ class is \mbox{$[\ket{1} \!\to\! \ket{3} \!\to\! \ket{2}]^\text{H}$.} Note that the pathway representing the $k=-1$ class does not explicitly contain the default pathway. In this way, the parameter $k$ entirely defines a pathway class and the only differences between pathway classes are described by their value of $k$.

Because Hermitian classes on this system are characterized by the number of times they traverse a single cycle, knowing the number of cycle traversals implies knowledge of the entire Hermitian pathway class. \textit{Critically, this means that only a single transition on the cycle needs to be modulated to extract full Hermitian pathway class information.} In this motivating example, take the \mbox{$\ket{2} \!\to\! \ket{3}$} transition and modulate it with some frequency $\gamma_0$; if a pathway has \mbox{$\gamma^{n(l_1,\dots,l_{n-1})}_{ba}=a\gamma_0$} then it belongs to the Hermitian pathway class with \mbox{$k=a$} in the set previously described. For example, a pathway with frequency $0\gamma_0$ corresponds to the class \mbox{$[\ket{1} \!\to\! \ket{2}]^\text{H}$,} a pathway with frequency $2\gamma_0$ corresponds to using the cycle twice counterclockwise as \mbox{$[\ket{1} \!\to\! \ket{2} \!\to\! \ket{3} \!\to\! \ket{1} \!\to\! \ket{2} \!\to\! \ket{3} \!\to\! \ket{1} \!\to\! \ket{2}]^\text{H}$,} and the frequency $-\gamma_0$ corresponds to using the cycle once clockwise as \mbox{$[\ket{1} \!\to\! \ket3 \!\to\! \ket{2}]^\text{H}$.} This demonstrates that in this system, all Hermitian pathway classes can be distinguished by only encoding a single transition.

The computational cost reduction provided by this optimization is very significant. If a base of 7 is used to define the frequencies, then if every transition is modulated, the frequency range has size $7^3\gamma_0 = 343\gamma_0$ so a minimum of 343 sample points are required to recover the amplitudes of the Fourier transform. Alternatively, if just the \mbox{$\ket{2} \!\to\! \ket{3}$} transition is modulated, then the frequency range is only $7^1 \gamma_0$ and therefore only 7 sample points are required to recover the amplitudes of the smaller Fourier transform. As shown above, the amplitudes of the smaller encoding are sufficient to reconstruct the amplitudes of the larger encoding. Therefore, this optimized encoding is 49 times faster than the unoptimized case. 

Thinking of pathways in terms of cycles in this way is what enables the computational cost reductions of OHPE. A full description, generalization, and proof of this concept are provided in Appendix \ref{apx:proof}, and the implications of this interpretation are discussed in Appendix \ref{apx:cycles}.

\section{Formulation and Proof of Optimal Hermitian Pathway Encoding}\label{apx:proof}
Optimal Hermitian Pathway Encoding (OHPE) exploits the backtracking-agnostic nature of Hermitian encoding by generating an encoding that forms a minimal complete basis for the set of cycles in a graph. We will prove that this novel encoding optimally reproduces the results of the original Hermitian encoding. The proof is detailed in three parts: (1) the graph theory and algebraic topology required to describe the problem in algebraic terms, (2) a description of the problem of optimizing Hermitian encoding in graph theoretic terms, and (3) a proof that OHPE is the optimal solution to that problem.
\subsection{Background and Definitions}\label{apx:b1}
First, we will describe the background necessary to follow the proof of OHPE. With the exception of Definitions 1 and 6, further information on the definitions and theorems of the following section can be found in standard graph theory (\cite{graph}) and algebraic topology (\cite{alg}) references.

\begin{definition}[Hamiltonian Graphs]
    Distinguishing pathway classes can be reduced to a problem in graph theory and algebraic topology. Consider Hamiltonians of the form $H(t) = H_0 - \mu\varepsilon(t)$. All transitions between states are encapsulated in the dipole matrix $\mu$. The Hamiltonian graph is constructed with one vertex for each eigenstate of $H_0$ by placing an edge between each pair of vertices that corresponds to an allowed transition in $\mu$. This is equivalent to constructing a graph using $\mu$ as an adjacency matrix (ignoring the numerical values and considering only whether they are nonzero). The Hamiltonian graph encapsulates the set of allowed transitions between the eigenstates. Consistent with standard graph theory notation, $V$ is the set of vertices and $E$ is the set of edges; $|V|=d$ is the dimension of the Hamiltonian and $|E| = {r}$ is the number of distinct transitions allowed by the system (the number of nonzero elements in the strictly upper triangular part of $\mu$). In general, if $\ket l$ is an eigenstate of $H_0$ then $v_l$ will refer to the corresponding vertex of the Hamiltonian graph.
\end{definition}

\begin{definition}[Orientation] In order to discern the direction of transitions on the Hamiltonian graph, we define an \textit{orientation} of the graph by giving every edge a \textit{forward} and \textit{backward} direction \cite{alg}. The orientation chosen is arbitrary and has no physical significance. For convenience, we will follow a standard procedure, the \textit{natural orientation}, when drawing graphs: instead of drawing undirected edges, we draw an oriented edge from $v_i$ to $v_ j$ if $\mu_{ji} \neq 0$ and $i < j$. An oriented edge can be traversed in either direction; the orientation exists only so that one direction is forward and the other direction is backward. By default, Hamiltonian graphs will be oriented using the natural orientation. 
\end{definition}

\begin{definition}[Directed Walk] A \textit{directed walk} on a graph is defined as a sequence of directed edges $w = \pns{e_1, \dots, e_{n-1}}$ such that there exists a sequence of vertices $\pns{v_1,\dots,v_n}$ where each $e_i$ is a directed edge from $v_i$ to $v_{i+1}$. A quantum pathway from $\ket{l_1}$ to $\ket{l_n}$ was defined in the introduction as a series of transitions \mbox{$\ket{l_{1}} \!\to\! \ket{l_{2}} \!\to\! \cdots \!\to\! \ket{l_{n}}$.} Because for each transition \mbox{$\ket{l_i} \!\to\! \ket{l_{i+1}}$} there must be an edge between $v_{l_{i}}$ and $v_{l_{i+1}}$, treating the sequence $\pns{v_{l_1}, \dots, v_{l_n}}$ as the vertex sequence of a walk leads to a natural bijection between quantum pathways and directed walks on the Hamiltonian graph. We will therefore use ``walk'' and ``pathway'' interchangeably hereafter. Because we use the natural orientation of the Hamiltonian graph, we can define a transition \mbox{$\ket{l_i} \!\to\! \ket{l_{i+1}}$} to be a \textit{forward transition} if \mbox{${l_i} <{l_{i+1}}$} and a \textit{backward transition} otherwise; this corresponds to a forward traversal or backward traversal of the corresponding oriented edge in the Hamiltonian graph. 
\end{definition}

\begin{definition}[Spanning Tree] A \textit{spanning tree} of a graph $G$ is a connected subgraph $T \subset G$ which has no cycles and includes all of $V$. A spanning tree of a graph with $|V|$ vertices has $|V| - 1$ edges. Note that any spanning tree includes a unique shortest directed walk from $v_S$ to $v_F$ for any $v_S,v_F \in V$.
\end{definition}

\begin{definition}[1-Chain] One useful property of a walk is the net number of times it traverses each oriented edge. When a transition occurs in the forward direction, the net number of transitions across that edge increments, and when a transition occurs in the backward direction, that number decrements. To formalize this, consider the $|E|$ dimensional vector space\footnote{Because $\ZZ$ is not a field, this is a free module over $\ZZ$, not a vector space; for convenience, we will call it a vector space regardless. In addition, it should be noted that free modules over $\ZZ$ can be viewed more simply as abelian groups $(\ZZ^n, +)$, but we will continue to refer to them as vector spaces so that we can more readily apply techniques from linear algebra.} $C_1 \cong \ZZ^{|E|}$ with basis vectors $\bcs{\vb e_1, \dots, \vb e_{|E|}}$. This is called the space of \textit{$1$-chains}. For every edge $e_i$, forward traversals of $e_i$ correspond to the vector $\vb e_i$, and backward traversals correspond to $-\vb e_i$. Given a directed walk $w$ from $v_S$ to $v_F$, define the 1-chain $w'$ of that walk to be the sum of the vectors corresponding to each transition. Every pathway has a unique 1-chain, but different pathways may have the same 1-chain. In addition, some 1-chains have no pathway; we will see that these 1-chains correspond to empty pathway classes, and eliminating them is a key result of the proof.
\end{definition}

\begin{definition}[Pathway Signature] By definition, the coefficients of a pathway's Hermitian frequency (Eq.~\eqref{eq:basis_freq}) are exactly the components of its 1-chain, which can be understood by considering the linear function $\Gamma: C_1 \to \RR$ defined by $\Gamma(\vb e_{ji}) = \gamma_{ji}$. We will define the \textit{full Hermitian pathway signature} (FHPS) of a pathway to be the list of coefficients in its basis frequency decomposition; this is the list of coefficients of its oriented edges, i.e.~its 1-chain. By definition, the set of all pathways that share an FHPS is called a Hermitian pathway class.
\end{definition}

\begin{definition}[Incidence Matrix] We can algebraically represent a graph with its incidence matrix $D$, which has $|V|$ rows and $|E|$ columns. Column $i$ represents oriented edge $e_i$ and row $j$ represents vertex $v_j$. If column $i$ represents the oriented edge from $v_a$ to $v_b$, then it has a $1$ in row $b$, a $-1$ in row $a$, and a $0$ everywhere else. The domain of this matrix is $C_1$; we will define the space of 0-dimensional chains or \textit{$0$-chains} $C_0 \cong \ZZ^{|V|}$ with basis vectors $\bcs{\vb v_1,\dots,\vb v_{|V|}}$ to be the codomain of this matrix. By the definition of $D$, if $e_i$ is the oriented edge from $v_a$ to $v_b$ then $D \vb e_i = \vb v_b - \vb v_a$. 
\end{definition}

\begin{definition}[Boundary] For a 1-chain $c$, its associated 0-chain $Dc$ is called the \textit{boundary} of $c$. The boundary of a pathway $p$ is defined to be the boundary of its 1-chain $p'$. This is physically meaningful as any pathway from $\ket i$ to $\ket j$ will have boundary $ \vb v_j - \vb v_i$. In addition, it should be noted that for any $v_S$ and $v_F$, every spanning tree $T$ has a unique 1-chain contained within it with boundary $\vb v_F -\vb v_S$, and all pathways in $T$ from $\ket S$ to $\ket F$ share that 1-chain.
\end{definition}

\begin{definition}[Cycle] The most natural definition of a cycle in this framework is that a cycle is a 1-chain whose boundary is 0. Explicitly, a cycle is a 1-chain in the kernel of $D$. This definition has a few key features:
\begin{enumerate}
    \item This definition of a cycle, which is the one used in algebraic topology \cite{alg}, is different from the graph theory definition of a cycle \cite{graph}. 
    \item For any pathway $p$ that starts and ends at the same vertex, its 1-chain $p'$ is a cycle. 
    \item Because the kernel of $D$ is a subspace of $C_1$, the set of cycles is a vector space whose dimension is the nullity of $D$.
\end{enumerate}
\end{definition}

With the equivalence between 1-chains and Hermitian pathway classes established, we no longer need to consider classes of quantum pathways (the equivalent but distinct formulation used in the main text). The entire problem has been translated to algebraic graph theory: Hamiltonians become directed graphs, pathways become directed walks, and Hermitian frequencies become 1-chains. Due to this equivalence, the problem can now be restated and proved using the tools of algebraic topology.

\subsection{Problem Description}\label{apx:problem_statement}

Here we define $P_{SF}$ to be the set of all pathways on a Hamiltonian graph $G = (V, E)$ from vertex $v_S$ to vertex $v_F$. Any pathway $p \in P_{SF}$ has a corresponding 1-chain, its \textit{full Hermitian pathway signature} denoted $\FHPS(p)\in C_1$, which can be calculated by placing a ``counter'' on each oriented edge of the Hamiltonian graph and counting the difference between the number of forward and backward transitions of that edge.

We note that modulating only some transitions of the Hamiltonian is equivalent to placing a counter only on some edges of the graph. Specifically, if $k<|E|$ edges are encoded, then the readout from those counters is an element of $\ZZ^k$ rather than of $C_1 \cong \ZZ^{|E|}$. 
The problem we aim to solve is as follows: {what is the minimal set of edges where counters must be placed so that if the readout of those counters is called the \textit{optimal Hermitian pathway signature} $\OHPS(p)$, then for all $p, q \in P_{SF}$, $\FHPS(p) = \FHPS(q)$ if and only if $\OHPS(p) = \OHPS(q)$?}

\subsection{Proof of Optimal Hermitian Pathway Encoding} \label{apx:proof_of_ohpe}
Theorems \ref{thm:2}--\ref{thm:5restrictioniso} have been proven before (e.g.~\cite{alg}), but we provide constructive proofs here that are more conducive to physical interpretation in the context of mechanism analysis. The main contribution in this context is summarized in Theorems \ref{thm:1}, \ref{thm:6FisoO}, and \ref{thm:7}, which provide a bridge between Hamiltonian encoding and algebraic topology, allowing these results to be applied in the optimization of mechanism analysis.

\begin{lemma} \label{thm:1}
Let $P'_{SF}$ be the set of 1-chains with boundary $\vb v_F - \vb v_S$. For any choice of ``default pathway'' $d \in P_{SF}$ with 1-chain $d'\equiv\FHPS(d) \in P'_{SF}$, the map $\phi_d(p') = p' - d'$ is a bijection $\phi_d : P'_{SF} \to \ker D$.
\end{lemma}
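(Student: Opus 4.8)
The plan is to recognize Lemma~\ref{thm:1} as the standard fact that the solution set of an inhomogeneous linear equation over the module $C_1$ is a coset of the kernel of the associated linear map, and that translation by any particular solution gives a bijection between that coset and the kernel. Concretely, $P'_{SF} = \{c \in C_1 : Dc = \vb v_F - \vb v_S\}$ is the preimage $D^{-1}(\vb v_F - \vb v_S)$, and $\phi_d$ is simply translation by $-d'$ within the abelian group $C_1$. I would therefore prove the three standard facts: that $\phi_d$ maps $P'_{SF}$ into $\ker D$, that it is injective, and that it is surjective onto all of $\ker D$.

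First I would verify the target. Since $d \in P_{SF}$ is a pathway from $v_S$ to $v_F$, Definition~8 gives $Dd' = \vb v_F - \vb v_S$, and every $p' \in P'_{SF}$ satisfies the same relation by the definition of $P'_{SF}$. Linearity of $D$ then yields $D\phi_d(p') = D(p' - d') = Dp' - Dd' = (\vb v_F - \vb v_S) - (\vb v_F - \vb v_S) = 0$, so $\phi_d(p') \in \ker D$ and the codomain is correct. Injectivity is immediate because $\phi_d$ is a translation: $\phi_d(p') = \phi_d(q')$ forces $p' = q'$. For surjectivity, given any cycle $c \in \ker D$, I would set $p' := c + d'$ and check $Dp' = Dc + Dd' = 0 + (\vb v_F - \vb v_S) = \vb v_F - \vb v_S$, so that $p' \in P'_{SF}$ while $\phi_d(p') = (c + d') - d' = c$. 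Hence $\phi_d$ is a bijection.

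There is no genuine analytic difficulty here; the entire argument reduces to $\ZZ$-linearity of $D$ and the coset structure of a solution set. The only points I would treat with care are two bookkeeping issues flagged earlier in the excerpt. First, the lemma is stated over the \emph{full} module $P'_{SF}$ of all $1$-chains with boundary $\vb v_F - \vb v_S$, not merely those realized by actual quantum pathways (the distinction noted in Definition~5); working with the full preimage is exactly what makes surjectivity onto all of $\ker D$ hold, so I would be explicit that $P'_{SF}$ denotes $1$-chains and not walks. Second, I would note that a default pathway $d$ exists precisely when $P_{SF} \neq \emptyset$, which furnishes the particular solution $d'$ that the construction requires. Finally, since $\ZZ$ is only a ring, I would emphasize that the proof uses nothing beyond the additive (translation) structure of $C_1$ and the linearity of $D$, so the lack of a field plays no role.
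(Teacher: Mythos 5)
Your proof is correct and follows essentially the same route as the paper: both verify via linearity of $D$ that $\phi_d$ lands in $\ker D$, and both establish bijectivity by exhibiting the translation inverse $c \mapsto c + d'$ (your separate injectivity/surjectivity check is just this inverse-map argument spelled out, with the added care of confirming $c + d' \in P'_{SF}$). Your bookkeeping remarks (chains versus realizable walks, existence of $d$, and $\ZZ$-module versus vector space) are sensible clarifications but do not alter the argument.
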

\begin{proof}
If $p' \in P'_{SF}$ then $Dp' = \vb v_F - \vb v_S$. Because $Dd' = \vb v_F - \vb v_S$ as well, we can immediately that see $D(p' - d') = 0$, so if $\phi_d$ has domain $P'_{SF}$ then its codomain is $\ker D$. Because $\phi_d^{-1}:\ker D \to P'_{SF}$ exists as $\inv\phi_d(c) = c + d'$, $\phi_d$ is a bijection.
\end{proof}

\begin{lemma}\label{thm:2}
The rank of $D$ is $|V| - 1$ and its nullity is $b_1 \equiv |E|-|V|+1$.
\end{lemma}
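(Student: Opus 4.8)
The plan is to pin down $\operatorname{rank} D$ by proving matching upper and lower bounds of $|V|-1$, and then read off the nullity from rank--nullity. First I would establish the upper bound. By the definition of the incidence matrix, each generator satisfies $D\vb e_i = \vb v_b - \vb v_a$, so every column of $D$ has coordinate sum zero; consequently the entire image of $D$ is contained in the hyperplane $H = \{x \in C_0 : \sum_j x_j = 0\}$, which has dimension $|V|-1$. Hence $\operatorname{rank} D \le \dim H = |V|-1$.

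For the lower bound I would exhibit $|V|-1$ independent vectors in the image of $D$ using a spanning tree $T$, which exists because the Hamiltonian graph is connected. Fix a root vertex $v_1$. For each $k$, the unique path in $T$ from $v_1$ to $v_k$ is a walk whose $1$-chain $c_k$ is supported on tree edges, and as noted in the definition of the boundary it satisfies $D c_k = \vb v_k - \vb v_1$. Thus each $\vb v_k - \vb v_1$ lies in the image of $D$. The $|V|-1$ vectors $\{\vb v_k - \vb v_1 : k = 2,\dots,|V|\}$ are linearly independent, since the $\vb v_j$ form a basis of $C_0$: any relation $\sum_{k\ge 2} c_k(\vb v_k - \vb v_1)=0$ forces every $c_k=0$. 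Therefore $\operatorname{rank} D \ge |V|-1$, and combining the two bounds gives $\operatorname{rank} D = |V|-1$.

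The nullity then follows immediately from rank--nullity: $\dim \ker D = |E| - \operatorname{rank} D = |E|-(|V|-1) = |E|-|V|+1 = b_1$, as claimed.

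There is no deep obstacle here, as this is a classical fact; the only point requiring care is that $C_1$ and $C_0$ are free $\ZZ$-modules rather than vector spaces over a field. I would address this with a single remark: the matrix rank and the rank of $\ker D$ as a free module coincide with the corresponding dimensions computed over $\mathbb{Q}$, so the rank--nullity argument transfers verbatim and $\ker D$ is in fact a free $\ZZ$-module of rank $b_1$ (no torsion arises). This torsion-freeness is exactly what later guarantees that the cycle space admits an integer basis, which is needed for the subsequent construction of the optimal encoding.
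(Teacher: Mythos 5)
Your proof is correct, and it is worth noting that for this lemma the paper does not actually supply an argument of its own: its ``proof'' is a citation to Theorems 4.3--4.5 of the algebraic-topology reference, so your write-up fills a gap rather than paralleling an in-paper derivation. Your strategy is the classical one and meshes cleanly with the paper's formalism: the upper bound $\operatorname{rank} D \le |V|-1$ from the column-sum-zero observation (the image lies in the hyperplane $\{x \in C_0 : \sum_j x_j = 0\}$), the lower bound from the boundaries $D c_k = \vb v_k - \vb v_1$ of spanning-tree paths, and rank--nullity for the conclusion; each ingredient ($1$-chains, boundaries, spanning trees) is already defined in Appendix B1, so the argument could be spliced in verbatim. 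Two points deserve emphasis. First, the lemma as stated is true only for connected graphs (in general the rank is $|V|-c$ with $c$ the number of components); you invoke connectivity to obtain the spanning tree, which is consistent with the paper's standing assumption that spanning trees of the Hamiltonian graph exist, but the dependence should be made explicit since the lemma fails without it. Second, your closing remark on the $\ZZ$-module issue is exactly the right disclaimer and matches the paper's own footnote treating $C_1 \cong \ZZ^{|E|}$ as a ``vector space'': the kernel is a submodule of a free module over a PID, hence free and torsion-free, with rank equal to $\dim_{\mathbb Q}$ of the rationalized kernel, so rank--nullity transfers and the later construction of an integer fundamental cycle basis (Theorem 4 of the paper) remains legitimate.
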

\begin{proof}
See proof in Theorems 4.3--4.5 of \cite{alg}.
\end{proof}

\begin{lemma} \label{thm:3cycleInTree} Fix a spanning tree $T \subset G$ (a spanning tree will be fixed for all future lemmas and theorems). Let $E \backslash T = \{e_{\tau_1}, \dots, e_{\tau_{b_1}}\}$ be the set of edges not in $T$. For every $e_{\tau_i}\in E\backslash T$, there exists a unique cycle $c_i$ in $G$ that includes only $e_{\tau_i}$ and edges in $T$, called the \textit{fundamental cycle} of $e_{\tau_i}$.
\end{lemma}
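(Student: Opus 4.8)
The plan is to construct the fundamental cycle $c_i$ explicitly from the unique tree path supplied by the spanning-tree definition, then to establish uniqueness by showing that the tree alone supports no nonzero cycle. Existence is essentially a boundary computation. I would orient $e_{\tau_i}$ from $v_a$ to $v_b$, so that $D\vb e_{\tau_i} = \vb v_b - \vb v_a$. Since $T$ is a spanning tree, it contains the unique shortest directed walk from $v_b$ to $v_a$ (as noted in the definition of a spanning tree); let $t_i \in C_1$ denote the $1$-chain of that walk, which is supported entirely on edges of $T$ and satisfies $Dt_i = \vb v_a - \vb v_b$. Setting $c_i \equiv \vb e_{\tau_i} + t_i$, a one-line computation gives $Dc_i = (\vb v_b - \vb v_a) + (\vb v_a - \vb v_b) = 0$, so $c_i \in \ker D$ is a cycle; by construction its support lies in $\{e_{\tau_i}\} \cup T$ with $e_{\tau_i}$ appearing with coefficient $+1$, which gives existence.

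For uniqueness I would adopt the normalization that any competing cycle $c_i'$ supported on $\{e_{\tau_i}\} \cup T$ also carries coefficient $+1$ on $e_{\tau_i}$; without such a convention the cycle is only determined up to an integer scalar, and I would flag this explicitly to match the phrase ``includes only $e_{\tau_i}$''. The difference $c_i - c_i'$ then lies in $\ker D$ and has vanishing coefficient on $e_{\tau_i}$, so it is supported entirely on $T$. The crux is to argue that the only such cycle is $0$: a $1$-chain supported on $T$ and lying in $\ker D$ is exactly an element of the kernel of the incidence matrix of $T$ itself, and since $T$ is connected with $|V|$ vertices and $|V|-1$ edges, the rank count of Lemma \ref{thm:2} applied to $T$ yields rank $|V|-1$ and hence nullity $0$. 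Therefore $c_i - c_i' = 0$, and uniqueness follows.

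The main obstacle I anticipate is this last step: cleanly justifying that restricting attention to tree-supported chains reduces to the tree's own incidence matrix, and that Lemma \ref{thm:2}, stated for the connected graph $G$, transfers to the connected subgraph $T$. The rank formula holds for any connected graph, so invoking it for $T$ is legitimate, but I would spell out the correspondence between tree-supported cycles and $\ker D_T$ to avoid sign or indexing ambiguities. Everything else --- the boundary computation and the existence of the tree path --- is routine given the definitions already established.
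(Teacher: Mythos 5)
Your proof is correct, and the existence half coincides exactly with the paper's: orient $e_{\tau_i}$ from $v_a$ to $v_b$, take the $1$-chain $t_i'$ of the tree walk from $v_b$ back to $v_a$, and verify $D(\vb e_{\tau_i} + t_i') = 0$. Where you genuinely differ is uniqueness. The paper disposes of it in one clause --- the cycle ``is unique because $t_i'$ is unique'' --- leaning on the assertion that all directed walks in $T$ between two vertices share a single $1$-chain; strictly speaking this is not quite enough, since a competing cycle supported on $\{e_{\tau_i}\}\cup T$ need not arise from a walk at all, so what is really needed is that all \emph{tree-supported $1$-chains} with a prescribed boundary coincide. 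Your argument closes exactly that gap: the difference of two competing cycles is a $1$-chain in $\ker D$ supported entirely on $T$, and applying the rank count of Lemma \ref{thm:2} to the connected graph $T$ itself ($|V|$ vertices, $|V|-1$ edges, hence nullity $0$) forces the difference to vanish. The two routes rest on the same underlying fact --- trees carry no nonzero cycles --- but yours is self-contained where the paper defers to an unproved assertion, at the modest cost of noting that the rank formula applies to any connected graph, not just $G$. Your normalization remark is also a genuine catch: in the paper's algebraic sense of ``cycle'' (any element of $\ker D$), every nonzero integer multiple of $c_i$ is likewise supported on $\{e_{\tau_i}\}\cup T$, so the lemma's uniqueness claim is literally true only after fixing the coefficient of $e_{\tau_i}$ to be $+1$ (equivalently, uniqueness up to integer scaling); neither the paper's statement nor its proof addresses this.
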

\begin{proof} Let $e_{\tau_i}$ be the directed edge from $v_a$ to $v_b$. All directed walks in $T$ from $v_b$ to $v_a$ have the same 1-chain: let $t_i'$ be that 1-chain. The 1-chain $\vb e_{\tau_i} + t_i'$ has boundary 0 and is unique because $t_i'$ is unique, proving the lemma. The fundamental cycle of $e_{\tau_i}$ is thus $c_i = \vb e_{\tau_i} + t_i'$.
\end{proof}

\begin{theorem} \label{thm:4}
Consider the set of fundamental cycles $\bcs{ c_i \stb  e_{\tau_i} \in E\backslash T}$. This set forms a basis for $\ker D$, called the \textit{fundamental cycle basis}.
\end{theorem}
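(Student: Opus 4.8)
The plan is to combine the dimension count of Lemma \ref{thm:2} with two structural facts: that each non-tree edge appears in exactly one fundamental cycle, and that the only cycle supported entirely on tree edges is the zero cycle. There are exactly $b_1 = |E|-|V|+1$ edges in $E\backslash T$, hence exactly $b_1$ fundamental cycles, and Lemma \ref{thm:2} gives $\dim \ker D = b_1$. So it suffices to show that $\{c_i\}$ is both $\ZZ$-linearly independent and $\ZZ$-spanning in $\ker D$. I emphasize that both conditions are checked over $\ZZ$: since $C_1$ is a free module rather than a vector space over a field, having the correct number of independent elements does not by itself yield a basis, so spanning must be verified explicitly rather than inferred from the dimension match.

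For linear independence, I would use the form $c_i = \vb e_{\tau_i} + t_i'$ established in Lemma \ref{thm:3cycleInTree}, where $t_i'$ is supported only on tree edges. Thus $c_i$ is the unique fundamental cycle with a nonzero coefficient on the non-tree edge $e_{\tau_i}$, and that coefficient equals $1$. Given a relation $\sum_j a_j c_j = 0$, reading off the coefficient of $\vb e_{\tau_i}$ annihilates every term except $a_i c_i$ and forces $a_i = 0$; applying this to each $i$ gives independence.

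For spanning, I would take an arbitrary cycle $z \in \ker D$, write $z = \sum_{e\in E} z_e \vb e$, and form $w = z - \sum_{i=1}^{b_1} z_{\tau_i} c_i$. Because each $c_i$ contributes coefficient $1$ to $e_{\tau_i}$ and $0$ to every other non-tree edge, $w$ has vanishing coefficient on all of $E\backslash T$ and is therefore supported only on tree edges; moreover $w \in \ker D$ since $\ker D$ is a submodule containing each $c_i$. The key remaining step is that a cycle supported entirely on $T$ must vanish: I would argue by peeling leaves of the tree, since the unique tree edge meeting a leaf vertex is the only edge incident there, its coefficient must be $0$ for the boundary to vanish at that vertex; deleting the leaf and iterating yields $w = 0$. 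Hence $z = \sum_i z_{\tau_i} c_i$ with integer coefficients.

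The main obstacle is the module-theoretic subtlety noted above: I must resist concluding ``basis'' from independence and the rank count alone, and instead produce the explicit integer-coefficient decomposition. The leaf-peeling step, which encodes the acyclicity of $T$, is the one genuinely substantive ingredient beyond bookkeeping, and it is precisely where the hypothesis that $T$ is a \emph{spanning tree}---not merely a subgraph with $|V|-1$ edges---enters the argument.
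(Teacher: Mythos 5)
Your proof is correct, and it is genuinely more than the paper offers: the paper's entire ``proof'' of this theorem is a citation to Theorem 5.2 of its algebraic topology reference \cite{alg}, so there is no in-paper argument to compare against, and your write-up supplies the self-contained derivation the paper outsources. Your route is the standard fundamental-cycle-basis argument and meshes cleanly with the surrounding machinery: independence by reading off the coefficient of each non-tree edge $\vb e_{\tau_i}$ (using the form $c_i = \vb e_{\tau_i} + t_i'$ from Lemma \ref{thm:3cycleInTree}, which is the same computation the paper performs inside its proof of Theorem \ref{thm:5restrictioniso} when it shows $c_i \mapsto \vb w_i$), and spanning by subtracting $\sum_i z_{\tau_i} c_i$ from an arbitrary cycle $z$ and killing the tree-supported remainder by peeling leaves. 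Your proof buys two things a bare appeal to the dimension count would not. First, the explicit integer-coefficient decomposition $z = \sum_i z_{\tau_i} c_i$: since $C_1$ is a free $\ZZ$-module rather than a vector space over a field (the subtlety the paper itself flags in a footnote), independence of $b_1$ elements plus Lemma \ref{thm:2} does not imply spanning, and the integer decomposition is exactly what the paper's later results (Theorems \ref{thm:6FisoO} and \ref{thm:7}) actually consume. Second, the leaf-peeling step makes explicit the structural fact --- a boundaryless 1-chain supported on an acyclic subgraph vanishes --- that the paper uses only implicitly when it asserts uniqueness of 1-chains within $T$. One refinement of emphasis: leaf-peeling needs only that $T$ is acyclic; the spanning (connectivity) hypothesis is consumed earlier, in Lemma \ref{thm:3cycleInTree} and in the count $|E \backslash T| = b_1$, to guarantee that the fundamental cycles exist in the first place.
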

\begin{proof}
See proof in Theorem 5.2 of \cite{alg}.
\end{proof}

% \textit{Proof:} The dimension of $\ker D$ is $b_1$ and there are $|E| - (|V| + 1)=b_1$ edges not in the spanning tree, so we only need to show that the $b_1$ elements of $B$ are linearly independent. We will prove by contradiction: assume there exists a nontrivial linear combination $\sum_{i=1}^{b_1} a_i c_i = \vec 0$. For a given non-tree $e_{\tau_k}$, $c_k$ is the only cycle that contains $e_{\tau_k}$ so the coefficient of $\vb e_{\tau_k}$ in the sum's 1-chain will be $a_k$; the sum is zero so $a_k = 0$. This holds for all $1 \leq k \leq b_1$, so all coefficients in the linear combination are zero. This contradicts the assumption, so there cannot be a nontrivial linear dependence relation between the elements of $B$, hence $B$ is a basis for $\ker D$.

\begin{definition} Let $W \cong \ZZ^{b_1}$ be spanned by basis vectors $\bcs{\vb w_1, \dots, \vb w_{b_1}}$ and consider the map $J : C_1 \to W$ defined by $J(\vb e_{\tau_i}) = \vb w_i$ for $e_{\tau_i} \in E \backslash T$ and $J(\vb e_i) = 0$ for $e_i \in T$. Note the following propositions:
\begin{enumerate}
        \item For any pathway $p$ with 1-chain $p'$, $J(p')$ corresponds to the set of readouts from the counters not on the spanning tree. Because there are $|V| - 1$ edges in the spanning tree, this means that $J(p')$ is the readout of $|E|-|V|+1 = b_1$ edges of the graph.
        \item The kernel of $J$ is the set of 1-chains contained entirely within $T$. 
        \item For all $v_S$ and $v_F$, there exists a unique 1-chain $d' \in \ker J \cap P'_{SF}$ because there exists a unique 1-chain in $T$ with boundary $\vb v_F-\vb v_S$.
\end{enumerate}
\end{definition}

\begin{theorem}\label{thm:5restrictioniso}
The restriction of $J$ to domain $\ker D$, denoted  $\nye J : \ker D \to W$, is an isomorphism.
\end{theorem}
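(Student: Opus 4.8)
The plan is to show that $\nye J$ carries the fundamental cycle basis of $\ker D$ onto the standard basis $\bcs{\vb w_1, \dots, \vb w_{b_1}}$ of $W$; since a $\ZZ$-linear map that sends a basis to a basis is automatically an isomorphism of free modules, this will finish the proof. The whole argument is therefore a single basis computation built on the fundamental cycle decomposition already established.

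First I would record that both $\ker D$ and $W$ are free $\ZZ$-modules of rank $b_1$: the former by Lemma \ref{thm:2} (its nullity is $b_1$) together with Theorem \ref{thm:4} (which exhibits the fundamental cycle basis $\bcs{c_i}$), and the latter by definition. The central computation is then to evaluate $\nye J$ on each fundamental cycle. Recall from Lemma \ref{thm:3cycleInTree} that $c_i = \vb e_{\tau_i} + t_i'$, where $t_i'$ is the 1-chain of a walk lying entirely within the spanning tree $T$. Since $J$ is $\ZZ$-linear with $J(\vb e_j) = 0$ for every tree edge $e_j \in T$, the tree part $t_i'$ (a $\ZZ$-combination of tree-edge basis vectors) is annihilated, so $J(t_i') = 0$. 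Combined with $J(\vb e_{\tau_i}) = \vb w_i$, this gives $\nye J(c_i) = \vb w_i$.

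Having shown $\nye J(c_i) = \vb w_i$, I would conclude as follows. The map $\nye J$ sends the basis $\bcs{c_i}$ of $\ker D$ to the basis $\bcs{\vb w_i}$ of $W$. Surjectivity is immediate, since the image is a submodule containing a generating set of $W$. For injectivity, if $\nye J\!\left(\sum_i a_i c_i\right) = \sum_i a_i \vb w_i = 0$ then each $a_i = 0$ by linear independence of the $\vb w_i$, whence $\sum_i a_i c_i = 0$; equivalently, the assignment $\vb w_i \mapsto c_i$ furnishes an explicit inverse homomorphism. Thus $\nye J$ is a $\ZZ$-module isomorphism.

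The argument is essentially self-contained once the fundamental cycle basis is in hand, so there is no serious obstacle; the only point requiring genuine care is that we work over $\ZZ$ rather than a field, so ``sending a basis to a basis'' must be justified by producing the inverse on basis elements rather than by a dimension count. This is precisely why the decomposition $c_i = \vb e_{\tau_i} + t_i'$ is the right tool: it isolates the single non-tree edge $e_{\tau_i}$ whose counter $J$ reads off as $\vb w_i$, making the basis-to-basis correspondence transparent and the inverse explicit.
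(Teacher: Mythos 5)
Your proposal is correct and follows essentially the same route as the paper's proof: both hinge on the computation $\nye J c_i = \vb w_i$ via the decomposition $c_i = \vb e_{\tau_i} + t_i'$ with $t_i' \in \ker J$, and both exhibit the explicit inverse defined on basis elements by $\vb w_i \mapsto c_i$. Your added remark about why an explicit inverse (rather than a dimension count) is needed over $\ZZ$ is a fair point of care, but it does not change the substance of the argument.
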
 

\begin{proof} First we note $\dim \ker D = \dim W$ by definition. We can show that $\nye J$ is an isomorphism by finding $\inv{\nye J}$, which we claim to be the fundamental cycle basis matrix, i.e.~the linear transformation defined by $\inv{\nye J} \vb w_i = c_i$. To demonstrate this, because $\bcs{c_i}$ form a basis for $\ker D$, it is sufficient to show that $\inv{\nye J} \nye J c_i = c_i$ for all $c_i$. First, we recall that $c_i$ was defined in Theorem 3 to be $\vb e_{\tau_i} + t_i'$ where $t_i' \in T$, i.e.~$t_i' \in \ker J$, so immediately we see that \mbox{$\nye J c_i = \vb w_i$.} By our expression for $\inv{\nye J}$, we see that $\inv{\nye J}  \nye J c_i = c_i$. \end{proof}

\begin{theorem} \label{thm:6FisoO} Fix a spanning tree $T$ in $G$. For any pathway $p \in P_{SF}$ with 1-chain \mbox{$\FHPS(p) \equiv p'$,} let its \textit{optimal Hermitian pathway signature} $\OHPS(p) \equiv Jp'$ be the readout of the counters on the edges of the graph not in $T$. If $d$ is the unique shortest pathway from $\ket S$ to $\ket F$ through $T$, then $\FHPS(p) = \phi_d^{-1}\inv{\nye J} \OHPS(p) $.
\end{theorem}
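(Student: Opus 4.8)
The plan is to trace the given pathway's 1-chain through the composite map on the right-hand side and confirm that it returns to itself, leaning entirely on the two bijections already in hand (Lemma \ref{thm:1} and Theorem \ref{thm:5restrictioniso}) plus the single structural fact that the default pathway $d$ lives inside the spanning tree. I would begin by fixing $p \in P_{SF}$ with $p' \equiv \FHPS(p)$, noting that $p' \in P'_{SF}$ since its boundary is $\vb v_F - \vb v_S$, and that $\OHPS(p) \equiv Jp'$ is the quantity fed into $\phi_d^{-1}\inv{\nye J}$. The entire claim reduces to showing $\inv{\nye J}\,\OHPS(p) = \phi_d(p')$, after which applying $\phi_d^{-1}$ (whose inverse is $\phi_d^{-1}(c) = c + d'$, from Lemma \ref{thm:1}) recovers $p'$ immediately.

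The first substantive step is to push $p'$ into $\ker D$ via $\phi_d$, giving $\phi_d(p') = p' - d' \in \ker D$, so that the isomorphism $\nye J$ of Theorem \ref{thm:5restrictioniso} may be applied. By linearity, $\nye J\bigl(\phi_d(p')\bigr) = J(p') - J(d')$, where on $\ker D$ the restricted map $\nye J$ agrees with the full map $J$. The crucial observation, and the only place where the specific choice of default pathway enters, is that $d$ is the unique shortest walk from $\ket{S}$ to $\ket{F}$ through $T$, so its 1-chain $d'$ is contained entirely within $T$; by proposition 2 of the definition preceding Theorem \ref{thm:5restrictioniso}, this means $d' \in \ker J$ and hence $J(d') = 0$. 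Thus $\nye J\bigl(\phi_d(p')\bigr) = Jp' = \OHPS(p)$.

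Applying $\inv{\nye J}$ to this identity yields $\phi_d(p') = \inv{\nye J}\,\OHPS(p)$, and then $\phi_d^{-1}$ gives $p' = \phi_d^{-1}\inv{\nye J}\,\OHPS(p)$, i.e.\ $\FHPS(p) = \phi_d^{-1}\inv{\nye J}\,\OHPS(p)$ as desired. I expect the proof to be short and essentially mechanical once the maps are composed in the correct order; the only point requiring care is the bookkeeping that makes the composition well-defined, namely confirming that $d' \in \ker J$ (so that $\OHPS(p)$ lands in the image of the isomorphism $\nye J$ and $\inv{\nye J}$ acts on it consistently) and that $\nye J$ may be freely replaced by $J$ on elements of $\ker D$. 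This is less an obstacle than a sanity check, since both facts are guaranteed by the preceding lemmas; the conceptual content is simply that the spanning-tree data encoded in $d'$ is invisible to $J$, so discarding it and restoring it through $\phi_d^{-1}$ is lossless.
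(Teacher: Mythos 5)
Your proposal is correct and is essentially the paper's own proof read in the forward direction: both arguments rest on exactly the same two facts, namely that $p'-d' \in \ker D$ (so $\nye J$ applies and agrees with $J$ there) and that $d' \in \ker J$ because $d$ lies inside the spanning tree, with the paper simplifying $\inv{\nye J} J p' + d'$ down to $p'$ while you build up from $\nye J(\phi_d(p')) = \OHPS(p)$ and invert. No gap; the two write-ups are algebraically identical.
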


\begin{proof} Substitute $\FHPS(p) = p'$, $\OHPS(p) = Jp'$, and $\phi_d^{-1}(\inv{\nye J} J p')=\inv{\nye J} J p'+d'$, so we want to prove $p' = \inv{\nye J} J p' + d'$. Because $p'-d' \in \ker D$ and $d' \in \ker J$, we have:
\begin{align*}
    \inv{\nye J} J p' + d' &= \inv{\nye J} (J (p'-d') + Jd') + d'\\
    &= \inv{\nye J} (\nye J (p'-d') + 0) + d'\\
    &= p' - d' + d' = p'.
\end{align*}
\end{proof}

With the problem stated in Appendix \ref{apx:problem_statement} now solved, we can restate Theorem \ref{thm:6FisoO} in physical terms:

\begin{theorem} \label{thm:7} Encoding only the edges of the Hamiltonian graph not in a spanning tree enables the reconstruction of all Hermitian pathway classes with the minimum amount of modulated transitions $b_1 = {r} - d + 1$.
\end{theorem}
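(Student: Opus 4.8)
The plan is to recognize that Theorem \ref{thm:7} is the physical restatement of Theorem \ref{thm:6FisoO} together with a genuinely new optimality claim, and to prove it in two stages: first \emph{correctness} (encoding the non-tree edges suffices to reconstruct every Hermitian pathway class), then \emph{minimality} (no encoding with fewer than $b_1$ modulated transitions can do so).

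For correctness I would simply invoke Theorem \ref{thm:6FisoO}. Since $\OHPS(p)=Jp'$ is a function of $\FHPS(p)=p'$, and conversely that theorem supplies the explicit reconstruction $\FHPS(p)=\phi_d^{-1}\inv{\nye J}\,\OHPS(p)$, the two signatures induce the \emph{same} partition of $P_{SF}$; hence the readouts on $E\backslash T$ recover every Hermitian pathway class with no loss of information. The edge count is then immediate: by Lemma \ref{thm:2} a spanning tree contains $|V|-1=d-1$ edges, so exactly $|E|-(d-1)=r-d+1=b_1$ edges are modulated, matching the nullity of $D$.

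The substantive new content is minimality. Any encoding that modulates a set $S\subseteq E$ of edges yields a pathway frequency $\sum_k n_k\gamma_k$ depending only on the net-traversal counts $\{n_k : e_k\in S\}$; its readout is therefore a function of the coordinate projection $\pi_S:C_1\to\ZZ^{|S|}$ onto those edges and can distinguish no two pathways that $\pi_S$ fails to distinguish. So to establish optimality it suffices to exhibit, whenever $|S|<b_1$, two distinct pathways in $P_{SF}$ with equal $\pi_S$-readout. A rank count supplies the cycle I need: restricting $D$ to the coordinate submodule $\ker\pi_S\cong\ZZ^{|E|-|S|}$ of $1$-chains supported off $S$ and applying rank-nullity, the kernel $\ker D\cap\ker\pi_S$ has rank at least $(|E|-|S|)-(d-1)=b_1-|S|>0$, so it contains a nonzero cycle $c$ with $\pi_S(c)=0$.

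The main obstacle is promoting this abstract $c$ to an honest collision between \emph{realizable} pathways, since Lemmas \ref{thm:1}--\ref{thm:6FisoO} operate on $1$-chains in $P'_{SF}$ whereas $P_{SF}$ consists of genuine walks. I would close this gap using connectedness of the Hamiltonian graph: writing $c$ in the fundamental cycle basis (Theorem \ref{thm:4}) as $c=\sum_i n_i c_i$, I construct a walk that first traverses a default pathway $d$ with $\FHPS(d)=d'$ and then visits each fundamental cycle $c_i$ in turn, routing between them along spanning-tree edges whose forward and backward traversals cancel in the $1$-chain, traversing $c_i$ a net $n_i$ times. The result is a genuine pathway $p\in P_{SF}$ with $\FHPS(p)=d'+c$. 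Because $c\neq 0$, $p$ and $d$ have distinct FHPS but, since $\pi_S(c)=0$, identical $\pi_S$-readout and hence identical encoding output, contradicting the encoding's validity. Therefore every admissible encoding uses at least $b_1$ edges, and encoding the $b_1$ non-tree edges is optimal.
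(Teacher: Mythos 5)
Your proposal is correct, and its overall skeleton matches the paper's: correctness is obtained by invoking Theorem \ref{thm:6FisoO} (the readouts on $E\backslash T$ determine and are determined by $\FHPS$, so the two signatures induce the same partition of $P_{SF}$), and minimality comes from the rank $b_1$ of $\ker D$ via Lemma \ref{thm:1}. Where you go beyond the paper is in the execution of the minimality half. The paper dispatches it in one sentence --- ``because of the bijection between $P'_{SF}$ and $\ker D$, it is impossible to encode all Hermitian pathway class information with fewer than $b_1$ edges'' --- which is an argument entirely at the level of $1$-chains. You make two refinements: first, you formalize an arbitrary competing encoding as a coordinate projection $\pi_S$ and run rank--nullity on $D$ restricted to $\ker\pi_S$ to extract a nonzero cycle $c$ supported off $S$; second, and more importantly, you promote the abstract collision $d'$ versus $d'+c$ to a collision between \emph{realizable} pathways by the tree-routing construction (legitimate since the existence of a spanning tree guarantees connectedness). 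This second step closes a real, if small, gap in the paper's own argument: as the paper itself notes in Definition 5, some $1$-chains are realized by no pathway, so a collision among $1$-chains does not automatically yield a collision among nonempty pathway classes. The trade-off is the expected one --- the paper's proof is shorter and leans on machinery already established, while yours is self-contained and airtight on exactly the point where the paper is terse.
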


\begin{proof} The edges of the Hamiltonian graph modulated in an encoding correspond to the rows included in the matrix of $J$. Theorem 6 proves by construction that $\OHPS(p) = \OHPS(q)$ if and only if $\FHPS(p) = \FHPS(q)$, so by definition, modulating only the edges not in the spanning tree yields sufficient information to reconstruct all Hermitian pathway classes. In addition, because of the bijection between $P'_{SF}$ and $\ker D$ proven in Lemma 1, it is impossible to encode all Hermitian pathway class information without using at least $\dim \ker D = b_1$ edges, so this procedure is optimal in terms of its number of encoded edges. This result means that OHPE is optimal in terms of time and space complexity: when using base $B$ encoding, the original Hermitian encoding required $O(B^{r})$ numerical integrations of the modulated Schr\"odinger equation, but OHPE only requires $O(B^{b_1}) = O(B^{r-d+1})$ integrations, which is a factor of $O(B^{d-1})$ improvement in computation time.\end{proof}

This completes the proof. Due to the optimality of this procedure, we call the process of encoding only the transitions not contained in a spanning tree of the Hamiltonian \textit{Optimal Hermitian Pathway Encoding.}

\section{Encoding with Cycles and Reinterpreting Hermitian Pathway Classes}\label{apx:cycles}
\begingroup \binoppenalty=1000 \relpenalty=1000

The proof described in Appendix \ref{apx:proof} lends itself to a fundamental interpretation of what Hermitian pathway classes actually describe. Fix a spanning tree $ T$, an initial state $\ket S$, and a final state $\ket F$, and let the default pathway $d$ be the unique pathway from $\ket S$ to $\ket F$ through $T$. Appendix \ref{apx:proof_of_ohpe} proves that each non-tree edge $e_{\tau_i} \notin T$ corresponds to a unique cycle $c_i$ in the Hamiltonian graph (its fundamental cycle), and any Hermitian pathway class $p'$ from $\ket S$ to $\ket F$ can be decomposed as $c+d'$ with $c = \sum_{i} a_i c_i$. 

The optimal Hermitian pathway signature $\OHPS(p)$ is the net number of times each non-tree edge $e_{\tau_i}$ is traversed, i.e.~the information provided by OHPE. The proof reveals that the net number of times that a non-tree edge $e_{\tau_i}$ is traversed in $p$ is actually the coefficient $a_i$ corresponding to the number of times that the corresponding fundamental cycle $c_i$ is present in $c$. If we treat these cycles as modifications to the default pathway $d$, then $\OHPS(p)$ reveals the cycles which must be added to $d$ to represent a pathway in the Hermitian pathway class of $p$. This result shows us that, formally, Hermitian pathway classes have a deeper interpretation beyond discarding backtracking and time-sequencing information: Hermitian pathway classes describe fundamental and complete topological information about the presence of cycles in a pathway.

One reason that this interpretation is helpful is that it allows for any OHPS (and therefore any Hermitian frequency) to be easily converted into a representative pathway by hand. Figure \ref{fig:3q_apx} demonstrates this. Each labeled edge $e_{\tau_i}$ has the following fundamental cycle $c_i$: 
\begin{enumerate}
    \item[$c_1$:] Traverse the bottom face counterclockwise once ($\ket 3 \!\to\! \ket 4 \!\to\! \ket 2 \!\to\! \ket 1 \!\to\! \ket 3$).
    \item[$c_2$:] Traverse the top face clockwise once ($\ket 7 \!\to\! \ket 8 \!\discretionary{}{}{}\to\! \ket 6 \!\to\! \ket 5 \!\to\! \ket 7$).
    \item[$c_3$:] Traverse the left face clockwise once ($\ket3 \!\to\! \ket7\!\discretionary{}{}{} \to\! \ket5 \!\to\! \ket1 \!\to\! \ket3$).
    \item[$c_4$:] Traverse the right face counterclockwise once and the front face counterclockwise once ($\ket4 \!\to\! \ket8 \!\to\! \ket6 \!\to\! \ket5 \!\to\! \ket1 \!\to\! \ket2 \!\to\! \ket4$).
    \item[$c_5$:] Traverse the front face counterclockwise once ($\ket2 \!\to\! \ket6 \!\to\! \ket5 \!\to\! \ket1 \!\to\! \ket2$).
\end{enumerate}
Any cycle on the graph can be written as a linear combination of these basis cycles. For example, we can represent the right face counterclockwise as \mbox{$c_4-c_5$} and the back face clockwise as \mbox{$c_1-c_2-c_3+c_4$.} Note that for a cycle $c_i$ to be in the expansion $\sum_i a_i c_i + d'$ of a pathway's Hermitian class, the pathway need not traverse every edge in that cycle; it is common for at least one edge to not be traversed as multiple cycles are added together and backtracking transitions cancel (e.g.~in $c_4$).

\begin{figure}\captionsetup{font=small}
    \begin{center}
            \includegraphics[width=0.3\textwidth]{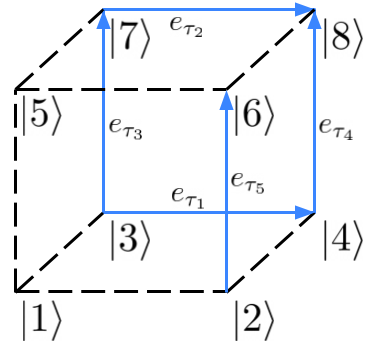}
    \end{center}
    \caption{\justifying The Hamiltonian graph of a cube system modeled after the 3-qubit system in \mbox{Figure \ref{fig:3q_system_graph}.} The vertices of the graph are labeled with their corresponding states. The spanning tree is depicted in dashed black. All other transitions (depicted in solid blue and labeled with $e_{\tau_i}$) are modulated.}
    \label{fig:3q_apx}
\end{figure}

A few examples showcase the ease of interpretation provided by this formulation. We will consider pathways from $\ket 1$ to $\ket5$, and for each example, we will find a representative pathway for the pathway class in question. First, the class with signature $(0,0,0,0,0)$ corresponds to the zero cycle, thus it is the Hermitian class of the default pathway \mbox{$\ket 1 \!\to\! \ket 5$.} A more complex pathway class may have signature $(0,0,0,1,-1)$, which corresponds to traversing the front face counterclockwise once and the top face counterclockwise once, so the pathway \mbox{$\ket1 \!\to\! \ket2 \!\to\! \ket6 \!\to\! \ket8 \!\to\! \ket7 \!\to\! \ket5$} modifies the default pathway to do exactly that. As the coefficients increase, translation stays simple: the pathway class with signature $(-1,0 ,0,1,-2)$ traverses the bottom face clockwise once, the front face counterclockwise once, and the top face counterclockwise twice; the pathway $\mathmbox{\ket1 \!\to\! \ket2} \!\discretionary{}{}{}\to\! \ket4 \!\to\! \ket3 \!\to\! \ket1 \!\to\! \ket2 \!\to\! \ket6 \!\to\! \ket8 \!\to\! \ket7 \!\to\! \ket5 \!\discretionary{}{}{}\to\! \ket6 \!\to\! \ket8 \!\to\! \ket7 \!\to\! \ket5$ satisfies this. 

Another reason why this formulation is useful is that it can be more constructive to think of Hermitian pathway classes in terms of their cycles rather than in terms of a representative pathway; doing so resolves the non-uniqueness of representative pathways which results from their inclusion of backtracking and time-sequencing information. Figure \ref{fig:ambiguity} demonstrates these issues with the traditional view: some pathway classes do not have a unique representative pathway, and in many of these cases, the shortest pathways in a class include some backtracking transitions which are not present in other pathways of the same class. The pathway class with signature $s_1=(0, 0, 1, -1, 0)$ has no unique representative pathway: the two shortest pathways in this class are  {$ \ket1 \!\to\!\ket 2 \!\to\!\ket 6 \!\to\!\ket 8 \!\to\!\ket 4 \!\to\!\ket 2 \!\to\!\ket 1  \!\to\!\ket 5$} and {$\ket1 \!\to\!\ket 5 \!\to\!\ket 6 \!\to\!\ket 8 \!\to\!\ket 4 \!\to\!\ket 2 \!\to\!\ket 6 \!\to\!\ket 5$.} Both of these pathways include backtracking as depicted in dashed green, and the specific backtracking transition used is distinct between the two pathways; these specific backtracking transitions are not present in every pathway in the class, so they are not fundamental to the pathway class and including them in the representative pathway may be misleading. A similar ambiguity occurs in the pathway class with signature $s_2 = (-1,  -1,0, 0, 0)$: the two shortest pathways in this class are {\begingroup \relpenalty=2000 \binoppenalty=2000 {$\ket1 \!\to\!\ket 2 \!\to\!\nobreak\ket 4 \!\discretionary{}{}{}\to\!\ket 3 \!\to\!\ket 1 \!\to\!\ket 5 \!\to\!\ket 7 \!\to\!\ket 3 \!\to\!\ket 1 \!\to\!\ket 5$} and {$\ket1 \!\to\!\ket 5 \!\discretionary{}{}{}\to\!\ket 7 \!\to\!\ket 3 \!\to\!\ket 1 \!\to\!\ket 2 \!\to\!\ket 4 \!\to\!\ket 3 \!\discretionary{}{}{}\to\!\ket 1 \!\to\!\ket 5$}\endgroup.} Here, the ambiguity arises from temporal differences in the pathways. Thinking of Hermitian pathway classes as fundamentally describing the cycles in a pathway resolves both of these ambiguities; with default pathway \mbox{$\ket 1 \!\to\!\ket 5$,} the Hermitian class with signature $s_1$ can now be interpreted as all pathways that traverse the right face once clockwise, while the Hermitian class with signature $s_2$ is all pathways that traverse the left face counterclockwise once and the bottom face clockwise once. This interpretation of pathway classes is more compact and more general than choosing a representative pathway, resolves the ambiguities inherent in representative pathway choice, and reflects the mathematical foundation of Hermitian encoding more directly.

\begin{figure}[t]\captionsetup{font=small}
    \centering
         \begin{subfigure}[b]{0.23\textwidth}
         \centering
         \includegraphics[width=\textwidth]{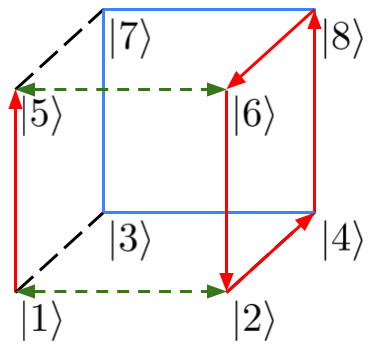}
         \caption{\justifying $s_1 = (0, 0, 1, -1, 0)$}
         \label{fig:s1}
     \end{subfigure}
     \hfill
     \begin{subfigure}[b]{0.23\textwidth}
         \centering
         \includegraphics[width=\textwidth]{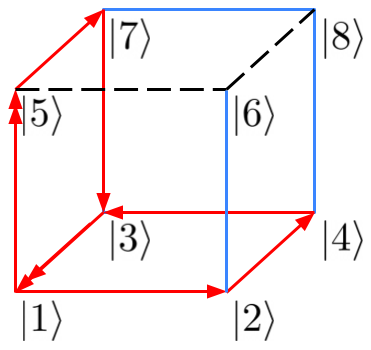}
         \caption{\justifying $s_2 = (-1, -1, 0,0,0)$}
         \label{fig:s2}
     \end{subfigure}
    \hspace{-6pt}
    \vspace{2pt}
    \caption{\justifying Two different Hermitian pathway classes from $\ket 1$ to $\ket 5$ are illustrated on the system in Figure \ref{fig:3q_apx}. Spanning tree edges are depicted in dashed black, encoded edges are depicted in solid blue, transitions that are present in every pathway in the pathway class are depicted with solid red arrows, and backtracking is depicted with dashed green arrows. Neither of these pathway classes has a unique shortest pathway: the shortest pathways of $s_1$ use different backtracking transitions to reach the right face of the cube, and the shortest pathways of $s_2$ differ by time-sequencing. Both of these ambiguities are resolved by considering pathway classes as representing sums of cycles rather than backtracking variations of a simplest representative pathway.}
    \label{fig:ambiguity}
\end{figure}
\endgroup

\section{Extending OHPE Techniques to Non-Hermitian Encoding}\label{apx:NHPE}

Optimal Hermitian Pathway Encoding works because discarding backtracking and time-sequencing information allows us to treat Hermitian pathway classes on a graph $G$ from $\ket S$ to $\ket F$ as part of a vector space, which makes sophisticated analysis and optimization possible. We wish to extend the techniques of OHPE to non-Hermitian encoding to decrease its computational cost; the encoding that we identify may not be optimal, but it will be faster than the original non-Hermitian encoding. To extend these techniques to non-Hermitian encoding, we must find some context in which non-Hermitian pathway classes lie in a vector space as well. To do so, we define the directed Hamiltonian graph $G^\to$: create a vertex for every state, and for each pair of states $\ket i$ and $\ket j$ with an allowed transition between them, construct the two arcs (directed edges) $v_i$ to $v_j$ and $v_j$ to $v_i$. Note that every arc in $G^\to$ corresponds to a unique nonzero off-diagonal element of the Hamiltonian and vice versa. The set of pathways on this directed graph is structured differently from the set of pathways on an oriented (undirected) graph. On an oriented graph, the pathways \mbox{$\ket 1 \!\to\! \ket 2$} and \mbox{$\ket 2 \!\to\! \ket 1$} each traverse the same edge but in different directions; on a directed graph, those pathways traverse two different arcs. Let the set of pathways from $S$ to $F$ on $G^\to$ be $P_{SF}^\to$; this set is manifestly in bijection with $P_{SF}$.

Consider the set of Hermitian pathway classes on $G^\to$ using the orientation defined by the directions of the arcs in $G^\to$ (rather than the natural orientation). These classes count the ``net number of traversals'' for every arc of $G^\to$, so they have twice as many counters as Hermitian pathway classes on $G$. Pathways can traverse each arc in only one direction, so the number of backward transitions is always 0. Because the net number of traversals of an arc in $G^\to$ is exactly the number of traversals of the corresponding edge in $G$ in one direction, two pathways are in the same Hermitian pathway class on $G^\to$ if and only if they are in the same non-Hermitian pathway class on $G$. As a result, we can directly apply OHPE to $G^\to$ to yield non-Hermitian pathway classes on $G$. This is done by constructing a spanning tree on $G^\to$ and encoding only the arcs not included in that spanning tree.

\section{Supplementary Algorithms}\label{apx:algo}
\begingroup
\hyphenpenalty=500 
This section details supplementary algorithms for use in the interpretation of results from OHPE and NHPE. Section \ref{apx:algo_FD} describes convenient algorithms for decomposing Hermitian or non-Hermitian frequencies encoded in base $B$ encoding into sums of basis frequencies so that the (net) number of transitions of each edge may be more easily read. 
Section \ref{apx:algo_FTM} describes algorithms for producing a map that translates frequencies into pathways. In practice, frequencies can be converted to representative pathways by hand. One way to do this is by inspecting the system graph and searching for a simple pathway that satisfies the (net) transition requirements imposed by the basis frequency decomposition; another way utilizes the cycle interpretation of pathway classes described in Appendix \ref{apx:cycles}. These algorithms can be useful in cases where manual conversion is infeasible, such as if there are too many significant pathway classes. 

\subsection{Frequency Decomposition}\label{apx:algo_FD}

The procedures of decomposing $\gamma^{{n(l_1,\dots,l_{n-1})}}_{ba}$ into basis frequencies are distinct for Hermitian and non-Hermitian encoding. To decompose a frequency in the non-Hermitian base $B$ encoding simply requires writing the integer $\gamma^{{n(l_1,\dots,l_{n-1})}}_{ba} / \gamma_0$ in base $B$. This is done explicitly in Algorithm \ref{alg:coef_nherm_freq}, which yields the coefficients $n_k$ used to construct a frequency $\gamma^{{n(l_1,\dots,l_{n-1})}}_{ba} = \sum_k n_k B^{k-1} \gamma_0$:

\begin{algorithm}[H]
\caption{\justifying Calculating Frequency Coefficients in Non-Hermitian Base $B$ Encoding}
\begin{algorithmic}
\State \texttt{int} $D \gets \gamma^{{n(l_1,\dots,l_{n-1})}}_{ba} / \gamma_0$
\State \texttt{list} $n \gets$ \texttt{list} of ${r}$ \texttt{int}s
\For{$k \gets 1, \dots , {r}$}
\State \texttt{int} $n[k] \gets D \mod B$
\State $D \gets (D - n[k]) / B$
\EndFor
\State \textbf{assert} $D = 0$
\State \Return $n$
\end{algorithmic}\label{alg:coef_nherm_freq}
\end{algorithm}

The procedure for decomposing $\gamma^{n(l_1,\dots,l_{n-1})}_{ba}$ into basis frequencies is less straightforward for Hermitian pathway classes. Decomposing a frequency in Hermitian base $B$ encoding requires accounting for the fact that the coefficients used are in $\Bqty{-{m_0}, \dots, 0 \dots, {m_0}}$ with ${m_0} = (B-1)/2$ instead of in $\Bqty{0, \dots, B-1}$. Algorithm \ref{alg:coef_herm_freq} yields the coefficients $n_k$ used to construct a frequency $\gamma^{{n(l_1,\dots,l_{n-1})}}_{ba} = \sum_k n_k B^{k-1} \gamma_0$ with odd $B$:
\begin{algorithm}[H]
\caption{\justifying Calculating Frequency Coefficients in Hermitian Base $B$ Encoding}
\begin{algorithmic}
\State \texttt{int} $D \gets {B}^{r} + \gamma^{{n(l_1,\dots,l_{n-1})}}_{ba} / \gamma_0$
\State \texttt{list} $n \gets$ \texttt{list} of ${r}$ \texttt{int}s
\For{$k \gets 1, \dots, {r}$}
\State \texttt{int} $d \gets D \mod B$
\If{$d < (B+1)/2$}
\State \texttt{int} $n[k] \gets d$
\Else 
\State \texttt{int} $n[k] \gets d - B$
\EndIf
\State $D \gets (D - n[k] ) / B$
\EndFor
\State \textbf{assert} $D = 1$
\State \Return $n$
\end{algorithmic}\label{alg:coef_herm_freq}
\end{algorithm}

In the Hermitian case, these coefficients correspond to the net amount of times each transition was used: if $\gamma_k$ is used $n_k$ times then the net number of times that transition $k$ was traversed is $n_k$.

\subsection{Frequency Translation Maps}\label{apx:algo_FTM}

The purpose of Algorithms \ref{alg:freq_path_OHPE} and \ref{alg:freq_path_NHPE} is to produce a \textit{translation map} given a Hamiltonian graph $G$, an initial state $\ket S$, a spanning tree $T$, and a length threshold $l$. A translation map is a function which, given a final state and a pathway class frequency, produces a low-order pathway in that pathway class. Both of these algorithms can be summarized succinctly as follows: iterate over all pathways of \mbox{length $\leq l$} and calculate their frequencies, then save a map from frequencies to a shortest pathway with that frequency. In practice, pathway class frequencies may often be translated to representative pathways by hand and no such iteration is needed, but these algorithms can be useful in cases where there are too many significant pathway classes for manual translation to be feasible.

Algorithm \ref{alg:freq_path_OHPE} utilizes the function \mbox{\textsc{PathwayFreqH}$(G, T, p)$} which returns the frequency of pathway $p$ under the given Hermitian encoding, while Algorithm \ref{alg:freq_path_NHPE} utilizes \mbox{\textsc{PathwayFreqNH}$(G, T, p)$} which returns the frequency of pathway $p$ under the given non-Hermitian encoding. In addition, both algorithms use a function to iterate over all pathways of length $\leq l$: \mbox{\textsc{IterPathwaysH}$(G, l)$} iterates over all pathways of \mbox{length $\leq l$} on $G$ using breadth-first search but does not allow for Rabi-flopping (transitions of the form \mbox{$\ket a \!\to\! \ket b \!\to\! \ket a$}), while \mbox{\textsc{IterPathwaysNH}$(G, l)$} iterates over all pathways of \mbox{length $\leq l$} using breadth-first-search without the Rabi-flopping restriction. The frequency translation map is constructed for Hermitian encoding as follows:

\begin{breakablealgorithm}
\caption{\justifying Constructing a Frequency Translation Map for OHPE}

\begin{algorithmic}
\State \texttt{map} $M \gets$ \texttt{map} from \texttt{(state, float)} to \texttt{pathway}

\ForAll{\texttt{pathway} $p \in $ \textsc{IterPathwaysH}$(G, l)$}
\State \texttt{state} $\ket{F} \gets $ final state of $p$
\State \texttt{float} $\gamma \gets$ \textsc{PathwayFreqH}$(G, T, p)$
\If{$(\ket F, \gamma)\notin M$}
\State $M$.add($(\ket F, \gamma) \mapsto p$)
\EndIf
\EndFor
\State \Return $M$
\end{algorithmic}
\label{alg:freq_path_OHPE}
\end{breakablealgorithm}

\noindent  The frequency translation map is constructed for non-Hermitian encoding similarly:

\begin{algorithm}[H]
\caption{\justifying Constructing a Frequency Translation Map for HHPE}

\begin{algorithmic}
\State \texttt{map} $M \gets$ \texttt{map} from \texttt{(state, float)} to \texttt{pathway}

\ForAll{\texttt{pathway} $p \in $ \textsc{IterPathwaysNH}$(G, l)$}
\State \texttt{state} $\ket{F} \gets $ final state of $p$
\State \texttt{float} $\gamma \gets$ \textsc{PathwayFreqNH}$(G, T, p)$
\If{$(\ket F, \gamma)\notin M$}
\State $M$.add($(\ket F, \gamma) \mapsto p$)
\EndIf
\EndFor
\State \Return $M$
\end{algorithmic}\label{alg:freq_path_NHPE}
\end{algorithm}

When interpreting mechanism, we consider only the most significant pathway classes. As such, each algorithm's length threshold should at first be set small and then increased until all significant pathway class frequencies (with amplitude greater than $\epsilon$) in an analysis have been translated into representative pathways.
\endgroup

\bibliographystyle{quantum}
\bibliography{sources}

\end{document}